\newcommand{\red}[1]{{#1}\xspace}
\newcommand{\markRed}{}
\newcommand{\email}[1]{\mbox{Email: \url{#1}}}
\def\tuple#1{\langle#1\rangle}
\def\eqref#1{(\ref{#1})}
\newcommand{\E}{\exists}
\def\false{\mathit{false}}
\def\true{\mathit{true}}
\newcommand{\fALC}{$\mathit{f}\!\mathcal{ALC}$\xspace}
\newcommand{\mI}{\mathcal{I}}
\newcommand{\NN}{\mathbb{N}}
\newcommand{\myend}{\mbox{}\hfill{\footnotesize$\blacksquare$}}
\newcommand{\comment}[1]{}
\newcommand{\bbQ}{\mathbb{Q}}
\newcommand{\nekst}{\mathit{next}}
\newcommand{\prev}{\mathit{prev}}
\newcommand{\Null}{\mathit{null}}
\newcommand{\Vector}{\mathit{Vector}}
\newcommand{\CompCB}{\mbox{$\mathsf{ComputeBisimulation}$}\xspace}
\newcommand{\CompCBt}{\mbox{$\mathsf{ComputeBisimulationEfficiently}$}\xspace}
\newcommand{\sCompCB}{\mbox{$\mathsf{s}\textrm{-}\mathsf{ComputeBisimulation}$}\xspace}
\newcommand{\sCompCBt}{\mbox{$\mathsf{s}\textrm{-}\mathsf{ComputeBisimulationEfficiently}$}\xspace}
\newcommand{\Initialize}{\mathsf{Initialize}}
\newcommand{\sInitialize}{\mathsf{s}\textrm{-}\mathsf{Initialize}}
\newcommand{\SV}{\Sigma_V}
\newcommand{\SE}{\Sigma_E}
\newcommand{\bbP}{\mathbb{P}}
\newcommand{\bbS}{\mathbb{S}}
\newcommand{\bbX}{\mathbb{X}}
\newcommand{\itSplit}{\mathit{split}}
\newcommand{\sSplit}{\textit{s-split}}
\newcommand{\pushKey}{\mathit{pushKey}}
\newcommand{\popKey}{\mathit{popKey}}
\newcommand{\maxKey}{\mathit{maxKey}}
\newcommand{\departingBlockEdge}{\mathit{departingBlockEdge}}
\newcommand{\sourceBlockEdge}{\mathit{sourceBlockEdge}}
\newcommand{\BlockEdge}{\mathit{BlockEdge}}
\newcommand{\blockEdge}{\mathit{blockEdge}}
\newcommand{\edgeLabel}{\mathit{label}}
\newcommand{\edgeOrigin}{\mathit{origin}}
\newcommand{\edgeDest}{\mathit{destination}}
\newcommand{\Vertex}{\mathit{Vertex}}
\newcommand{\VertexList}{\mathit{VertexList}}
\newcommand{\Edge}{\mathit{Edge}}
\newcommand{\vertexID}{\mathit{id}}
\newcommand{\vertexBlock}{\mathit{block}}
\newcommand{\Block}{\mathit{Block}}
\newcommand{\BlockList}{\mathit{BlockList}}
\newcommand{\departingBlocks}{\mathit{departingSubblocks}}
\newcommand{\comingEdges}{\mathit{comingEdges}}
\newcommand{\superBlocks}{\mathit{superBlocks}}
\newcommand{\SuperBlock}{\mathit{SuperBlock}}
\newcommand{\vertices}{\mathit{vertices}}
\newcommand{\verticesYp}{\mathit{vertices\!\_of\!\_\!Y'}}
\newcommand{\blocks}{\mathit{blocks}}
\newcommand{\compound}{\mathit{compound}}
\newcommand{\smallerBlock}{\mathit{smallerBlock}}
\newcommand{\Size}{\mathit{size}}
\newcommand{\Partition}{\mathit{Partition}}
\newcommand{\partition}{\mathit{partition}}
\newcommand{\SuperPartition}{\mathit{SuperPartition}}
\newcommand{\superPartition}{\mathit{superPartition}}
\newcommand{\compoundSuperBlocks}{\mathit{compoundSuperBlocks}}
\newcommand{\simpleSuperBlocks}{\mathit{simpleSuperBlocks}}
\newcommand{\degree}{\mathit{degree}}
\newcommand{\add}{\mathit{add}}
\newcommand{\Empty}{\mathit{empty}}
\newcommand{\keys}{\mathit{keys}}
\newcommand{\createBlock}{\mathit{createBlock}}
\newcommand{\createSuperBlock}{\mathit{createSuperBlock}}
\newcommand{\SuperBlockList}{\mathit{SuperBlockList}}
\newcommand{\EdgeList}{\mathit{EdgeList}}
\newcommand{\addBlock}{\mathit{addBlock}}
\newcommand{\addSuperBlock}{\mathit{addSuperBlock}}
\newcommand{\getVertex}{\mathit{getVertex}}
\newcommand{\Not}{\mathit{not}\ }
\newcommand{\blockEdges}{\mathit{blockEdges}}
\newcommand{\edges}{\mathit{edges}}
\newcommand{\clear}{\mathit{clear}}
\newcommand{\removeBlock}{\mathit{removeBlock}}
\newcommand{\block}{\mathit{block}}
\newcommand{\bool}{\mathit{bool}}
\newcommand{\processed}{\mathit{processed}}
\newcommand{\first}{\mathit{first}}
\newcommand{\changed}{\mathit{changed}}
\newcommand{\ComputeBlockEdges}{\mathsf{ComputeBlockEdges}}
\newcommand{\ComputeSubblocks}{\mathsf{ComputeSubblocks}}
\newcommand{\ClearAuxiliaryInfo}{\mathsf{ClearAuxiliaryInfo}}
\newcommand{\Split}{\mathsf{Split}}
\newcommand{\DoSplitting}{\mathsf{DoSplitting}}
\newcommand{\sComputeSubblocks}{\mathsf{s}\textrm{-}\mathsf{ComputeSubblocks}}
\newcommand{\sDoSplitting}{\mathsf{s}\textrm{-}\mathsf{DoSplitting}}
\newcommand{\sSfSplit}{\mathsf{s}\textrm{-}\mathsf{Split}}
\newcommand{\modified}[1]{{#1}\xspace}
\newtheorem{theorem}{Theorem}[section]
\newtheorem{lemma}[theorem]{Lemma}
\newtheorem{proposition}[theorem]{Proposition}
\newtheorem{corollary}[theorem]{Corollary}
\newtheorem{Definition}[theorem]{Definition}
\newtheorem{Example}[theorem]{Example}
\newtheorem{Remark}[theorem]{Remark}
\newenvironment{example}{\begin{Example}\begin{em}}{\end{em}\end{Example}}
\newenvironment{proof}{
	
	\smallskip
	
	\noindent
	{\em Proof.}}{
	
	\smallskip
	
}
\begin{document}
\sloppy
	
\title{Computing Crisp Bisimulations for Fuzzy Structures}

\author[1,2]{Linh Anh Nguyen}
\author[3]{Dat Xuan Tran}

\affil[1]{\small Institute of Informatics, University of Warsaw, Banacha 2, 02-097 Warsaw, Poland, \email{nguyen@mimuw.edu.pl}}

\affil[2]{\small
	Faculty of Information Technology, Nguyen Tat Thanh University, Ho Chi Minh City, Vietnam
}

\affil[3]{\small
	\email{dattranx105@gmail.com}
}

\date{}

\maketitle

\begin{abstract}
Fuzzy structures such as fuzzy automata, fuzzy transition systems, weighted social networks and fuzzy interpretations in fuzzy description logics have been widely studied. For such structures, bisimulation is a natural notion for characterizing indiscernibility between states or individuals. There are two kinds of bisimulations for fuzzy structures: crisp bisimulations and fuzzy bisimulations. While the latter fits to the fuzzy paradigm, the former has also attracted attention due to the application of crisp equivalence relations, for example, in minimizing structures. Bisimulations can be formulated for fuzzy labeled graphs and then adapted to other fuzzy structures. In this article, we present an efficient algorithm for computing the partition corresponding to the largest crisp bisimulation of a given finite fuzzy labeled graph. Its complexity is of order $O((m\log{l} + n)\log{n})$, where $n$, $m$ and $l$ are the number of vertices, the number of nonzero edges and the number of different fuzzy degrees of edges of the input graph, respectively. We also study a similar problem for the setting with counting successors, which corresponds to the case with qualified number restrictions in description logics and graded modalities in modal logics. In particular, we provide an efficient algorithm with the complexity $O((m\log{m} + n)\log{n})$ for the considered problem in that setting. 

\medskip

\noindent {\em Keywords:} bisimulation, fuzzy automata, fuzzy description logics, fuzzy transition systems, weighted social networks.
\end{abstract}


\section{Introduction}
\label{section:intro}

\red{Fuzzy structures such as fuzzy automata, fuzzy transition systems, weighted social networks and fuzzy interpretations in fuzzy description logics have been widely studied.}\footnote{\markRed This can be checked by searching Google Scholar.} All of these subjects concern structures that are graph-based and fuzzy/weighted. In such structures, both labels of vertices (states, nodes or individuals) and labels of edges (transitions, connections or roles) can be fuzzified. 

For structures like labeled transition systems or Kripke models, bisimulation~\cite{vBenthem76,vBenthem83,vBenthem84,Park81,HennessyM85} is a natural notion for characterizing indiscernibility between states. For social networks or interpretations in DLs, that notion characterizes indiscernibility between individuals~\cite{KurtoninaR99,Piro2012,BSDL-INS}. When concerning fuzzy structures instead of crisp ones, there are two kinds of bisimulations: crisp bisimulations and fuzzy bisimulations. While the latter fits to the fuzzy paradigm, the former has also attracted attention due to the application of crisp equivalence relations, for example, in minimizing fuzzy structures. 

In~\cite{CaoCK11} Cao {\em et al.} introduced and studied crisp bisimulations for fuzzy transition systems (FTSs). They provided results on composition operations, subsystems, quotients and homomorphisms of FTSs, which are related to bisimulations. 
In~\cite{DBLP:journals/fss/WuD16} Wu and Deng provided logical characterizations of crisp simulations/bisimulations over FTSs via a Hennessy-Milner logic. 
FTSs are already nondeterministic in a certain sense, as they are nondeterministic transition systems when using only the truth values 0 and 1. 
In~\cite{CaoSWC13} Cao {\em et al.} introduced a behavioral distance to measure the behavioral similarity of states in a nondeterministic FTS (NFTS). Such NFTSs are of a higher order than FTSs with respect to nondeterminism, because in an NFTS, for each state $s$ and action $a$, there may be a number of transitions $\tuple{s,a,\mu}$, where each $\mu$ is a fuzzy set of states. The work~\cite{CaoSWC13} studies properties of the introduced behavioral distance, one of which is the connection to crisp bisimulations between NFTSs, which are also introduced in the same article. 
In~\cite{CiricIDB12} {\'C}iri{\'c} {\em et at.} introduced two kinds of fuzzy simulation and four kinds of fuzzy bisimulation for fuzzy automata. Their work studies invariance of languages under fuzzy bisimulations and characterizes fuzzy bisimulations via factor fuzzy automata. 
Ignjatovi{\'c} {\em et at.}~\cite{IgnjatovicCS15} introduced and studied fuzzy simulations and bisimulations between fuzzy social networks in a way similar to~\cite{CiricIDB12}. 
In~\cite{Fan15} Fan introduced fuzzy bisimulations and crisp bisimulations for some fuzzy modal logics under the G\"odel semantics. She provided results on the invariance of formulas under fuzzy/crisp bisimulations and the Hennessy-Milner property of such bisimulations. 
In~\cite{jBSfDL2} Nguyen {\em et al.} defined and studied fuzzy bisimulations and crisp bisimulations for a large class of DLs under the G{\"o}del semantics. Apart from typical topics like invariance (of concepts, TBoxes and ABoxes) and the Hennessy-Milner property, the other topics studied in~\cite{jBSfDL2} are separation of the expressive powers of fuzzy DLs and minimization of fuzzy interpretations. The latter topic was also studied in~\cite{minimization-by-fBS}.  
As shown in~\cite{Fan15} and~\cite{jBSfDL2}, the difference between crisp bisimulation and fuzzy bisimulation with respect to logical characterizations (under the G\"odel semantics) relies on that involutive negation or the Baaz projection operator is used for the former but not for the latter. 

This work \red{studies} the problem of computing crisp bisimulations for fuzzy structures. 

\subsection{\markRed Related Work}

In~\cite{CiricIJD12} {\'C}iri{\'c} {\em et at.} gave an algorithm for computing the greatest fuzzy simulation/bisimulation (of any kind defined in~\cite{CiricIDB12}) between two finite fuzzy automata. 
They did not provide a detailed complexity analysis.  
Following~\cite{CiricIJD12}, Ignjatovi{\'c} {\em et at.}~\cite{IgnjatovicCS15} gave an algorithm with the complexity $O(ln^5)$ for computing the greatest fuzzy bisimulation between two fuzzy social networks, where $n$ is the number of nodes in the networks and $l$ is the number of different fuzzy values appearing during the computation. Later \red{Mici{\'c}} {\em et at.}~\cite{MicicJS18} provided algorithms with the complexity $O(ln^5)$ for computing the greatest right/left invariant fuzzy quasi-order/equivalence of a finite fuzzy automaton, where $n$ is the number of states of the considered automaton and $l$ is the number of different fuzzy values appearing during the computation. These relations are closely related to the fuzzy simulations/bisimulations studied in~\cite{CiricIDB12,CiricIJD12}. Note that, when the G\"odel semantics is used, the mentioned complexity order $O(ln^5)$ can be rewritten to $O((m+n)n^5)$, where $m$ is the number of (non-zero) transitions/connections in the considered fuzzy automata/networks. 
In~\cite{TFS2020} we provided an algorithm with the complexity $O((m+n)n)$ for computing the greatest fuzzy bisimulation between two finite fuzzy interpretations in the fuzzy DL \fALC under the G\"odel semantics, where $n$ is the number of individuals and $m$ is the number of non-zero instances of roles in the given fuzzy interpretations. We also adapted that algorithm for computing fuzzy simulations/bisimulations between fuzzy finite automata and obtained algorithms with the same complexity order. 

In~\cite{DBLP:journals/fss/WuCBD18} Wu {\em et al.} studied algorithmic and logical characterizations of crisp bisimulations for NFTSs~\cite{CaoSWC13}. The logical characterizations are formulated as the Hennessy-Milner property with respect to some logics. They gave an algorithm with the complexity $O(m^2n^4)$ for testing crisp bisimulation (i.e., for checking whether two given states are bisimilar), where $n$ is the number of states and $m$ is the number of transitions in the underlying nondeterministic FTS.

In~\cite{StanimirovicSC2019} Stanimirovi{\'c} {\em et at.} provided algorithms with the complexity $O(n^5)$ for computing the greatest right/left invariant Boolean (crisp) quasi-order matrix of a weighted automaton over an additively idempotent semiring. Such matrices are closely related to crisp simulations. They also provided algorithms with the complexity $O(n^3)$ for computing the greatest right/left invariant Boolean (crisp) equivalence matrix of a weighted automaton over an additively idempotent semiring. Such matrices are closely related to crisp simulations/bisimulations. 

As the background, also recall that Hopcroft~\cite{Hopcroft71} gave an efficient algorithm with the complexity $O(n\log{n})$ for minimizing states in a deterministic (crisp) finite automaton, and Paige and Tarjan~\cite{PaigeT87} gave efficient algorithms with the complexity $O((m+n)\log{n})$ for computing the coarsest partition of a (crisp) finite graph, for both the settings with stability or size-stability. As mentioned in~\cite{PaigeT87}, an algorithm with the same complexity order for the second setting was given earlier by Cardon and Crochemore~\cite{DBLP:journals/tcs/CardonC82}. 

\subsection{\markRed Motivation and Our Contributions}

As far as we know, there were no algorithms directly formulated for computing crisp bisimulations for fuzzy structures like FTSs or fuzzy interpretations in DLs. One can use the algorithm given by Wu {\em et al.}~\cite{DBLP:journals/fss/WuCBD18} for testing crisp bisimulation for a given FTS (as a special case of NFTS), but the complexity $O(m^2n^4)$ is too high (and computing the largest bisimulation is more costly than testing bisimulation). 
One can also try to adapt the algorithms with the complexity $O(n^3)$ given by Stanimirovi{\'c} {\em et at.}~\cite{StanimirovicSC2019} to compute the largest crisp bisimulation of a given finite fuzzy automaton. 

Bisimulations can be formulated for fuzzy labeled graphs and then adapted to other fuzzy structures. In this article, applying the ideas of the Hopcroft algorithm~\cite{Hopcroft71} and the Paige and Tarjan algorithm~\cite{PaigeT87}, we develop an efficient algorithm for computing the partition corresponding to the largest crisp bisimulation of a given finite fuzzy labeled graph. Its complexity is of order $O((m\log{l} + n)\log{n})$, where $n$, $m$ and $l$ are the number of vertices, the number of nonzero edges and the number of different fuzzy degrees of edges of the input graph, respectively. 
If $l$ is bounded by a constant, for example, when $G$ is a crisp graph, then this complexity is of order $O((m+n)\log{n})$. If \mbox{$m \geq n$} then, taking $l = n^2$ for the \modified{worst} case, the complexity is of order $O(m\log^2(n))$. 

We also study a similar problem for the setting with counting successors, which corresponds to the case 
with size-stable partitions for graphs~\cite{PaigeT87}, qualified number restrictions in DLs~\cite{jBSfDL2}, and graded modalities in modal logics~\cite{DBLP:journals/sLogica/Rijke00}. In particular, we provide an efficient algorithm with the complexity $O((m\log{m} + n)\log{n})$ for computing the partition corresponding to the largest crisp bisimulation of a given finite fuzzy labeled graph in the setting with counting successors. When $m \geq n$, this order can be simplified to $O(m\log^2(n))$.

\subsection{\markRed The Structure of This Work}

The rest of this article is structured as follows. In Section~\ref{section: prel}, we provide preliminaries on fuzzy labeled graphs, partitions and crisp bisimulations for such graphs. In Section~\ref{sec: Alg1}, we present the skeleton of our algorithm for the main setting (without counting successors) and prove its correctness. In Section~\ref{sec: Alg1-details}, we give details on how to implement that algorithm and analyze its complexity. Section~\ref{sec: setting-counting-successors} concerns the setting with counting successors. \red{In Section~\ref{sec: experiments}, we report on our implementation of the provided algorithms and the tests we have performed to check the correctness and efficiency of our algorithms and implemented programs. Section~\ref{sec: conc} contains concluding remarks.}

\section{Preliminaries}
\label{section: prel}

A {\em fuzzy labeled graph}, hereafter called a {\em fuzzy graph} for short, is a structure $G = \tuple{V, E, L, \SV, \SE}$, where $V$ is a set of vertices, $\SV$ (respectively, $\SE$) is a set of vertex labels (respectively, edge labels), \mbox{$E: V \times \SE \times V \to [0,1]$} is called the fuzzy set of labeled edges, and $L: V \to (\SV \to [0,1])$ is called the labeling function of vertices. 
It is {\em finite} if all the sets $V$, $\SV$ and $\SE$ are finite. 
Given vertices $x,y \in V$, a vertex label $p \in \SV$ and an edge label $r \in \SE$, $L(x)(p)$ means the degree of that $p$ is a member of the label of~$x$, and $E(x,r,y)$ the degree of that there is an edge from $x$ to $y$ labeled by~$r$. {\markRed The {\em size} of $E$ is defined to be} 
\[ \red{|E| = |\{\tuple{x,r,y} \in V \times \SE \times V: E(x,r,y) > 0\}|.} \]

Recall that a {\em partition} of $V$ is a set of pairwise disjoint non-empty subsets of $V$ whose union is equal to~$V$. Given an equivalence relation $\sim$ on $V$, the partition corresponding to $\sim$ is \mbox{$\{[x]_\sim \mid x \in V\}$}, where $[x]_\sim$ is the equivalence class of $x$ with respect to $\sim$ (i.e., \mbox{$[x]_\sim = \{x' \in V \mid x' \sim x\}$}). 

Let $\bbP$ and $\bbQ$ be partitions of $V$. We say that $\bbP$ is a {\em refinement} of $\bbQ$ if, for every $X \in \bbP$, there exists $Y \in \bbQ$ such that $X \subseteq Y$. In that case we also say that $\bbQ$ is {\em coarser} than $\bbP$. By this definition, every partition is a refinement of itself. 
Given a refinement $\bbP$ of a partition $\bbQ$, a block $Y \in \bbQ$ is {\em compound} with respect to $\bbP$ if there exists $Y' \in \bbP$ such that $Y' \subset Y$. 

Given a fuzzy graph $G = \tuple{V, E, L, \SV, \SE}$, a non-empty binary relation $Z \subseteq V \times V$ is called a {\em crisp auto-bisimulation} of $G$, or a {\em bisimulation} of $G$ for short, if the following conditions hold (for all possible values of the free variables): 
\begin{eqnarray}
Z(x,x') & \to & L(x) = L(x') \label{eq: CB 1} \\[1ex]
Z(x,x') \land E(x,r,y) > 0 & \to & \E y'\,(Z(y,y') \land E(x,r,y) \leq E(x',r,y')) \label{eq: CB 2} \\[1ex]
Z(x,x') \land E(x',r,y') > 0 & \to & \E y\,(Z(y,y') \land E(x',r,y') \leq E(x,r,y)), \label{eq: CB 3}
\end{eqnarray}
where $\to$ and $\land$ denote the usual crisp logical connectives. 

The above definition coincides with the one of~\cite[Section~4.1]{jBSfDL2} for the case when $\mI = \mI'$ and $\Phi = \emptyset$.  

\begin{proposition}\label{prop: MNGOA}
Let $G = \tuple{V, E, L, \SV, \SE}$ be a fuzzy graph. Then, the following assertions hold.
\begin{enumerate}
\item The relation $\{\tuple{x,x} \mid x \in V\}$ is a bisimulation of~$G$.
\item If $Z$ is a bisimulation of $G$, then so is $Z^{-1}$.
\item If $Z_1$ and $Z_2$ are bisimulations of $G$, then so is $Z_1 \circ Z_2$.
\item If $\mathcal{Z}$ is a non-empty set of bisimulations of $G$, then so is~$\bigcup\mathcal{Z}$.
\end{enumerate}
\end{proposition}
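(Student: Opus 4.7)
The plan is to verify the three defining conditions \eqref{eq: CB 1}--\eqref{eq: CB 3} for each of the four candidate relations, using only the assumption that the input relations are bisimulations. Since the relations built in parts (1)--(4) are non-empty by construction (either $V$ is implicitly non-empty, or $\mathcal{Z}$ is non-empty in part (4)), the only work is checking the three conditions. I would handle the four parts in the order stated, since each becomes slightly more involved.

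For part (1), I would note that $Z(x,x')$ forces $x = x'$, so \eqref{eq: CB 1} reduces to $L(x)=L(x)$ and, in \eqref{eq: CB 2}--\eqref{eq: CB 3}, the witness $y' := y$ (or $y := y'$) works trivially. For part (2), I would use the symmetry of the conditions: \eqref{eq: CB 1} for $Z^{-1}$ follows from \eqref{eq: CB 1} for $Z$ by swapping arguments; \eqref{eq: CB 2} for $Z^{-1}$ matches \eqref{eq: CB 3} for $Z$ after renaming, and vice versa. For part (3), I would unfold $(Z_1\circ Z_2)(x,x')$ as the existence of an intermediate $z$ with $Z_1(x,z)$ and $Z_2(z,x')$; equation \eqref{eq: CB 1} follows by chaining $L(x)=L(z)=L(x')$, and \eqref{eq: CB 2} follows by applying \eqref{eq: CB 2} for $Z_1$ to produce an intermediate successor $y_1$ with $Z_1(y,y_1)$ and $E(x,r,y)\leq E(z,r,y_1)$, then applying \eqref{eq: CB 2} for $Z_2$ (note that $E(z,r,y_1)>0$ by this inequality) to produce $y'$ with $Z_2(y_1,y')$ and $E(z,r,y_1)\leq E(x',r,y')$; transitivity of $\leq$ closes the estimate, and $(Z_1\circ Z_2)(y,y')$ holds via $y_1$. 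Condition \eqref{eq: CB 3} is handled symmetrically.

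For part (4), writing $Z^* := \bigcup\mathcal{Z}$, the point is that any $Z^*(x,x')$ is witnessed by some $Z\in\mathcal{Z}$ with $Z(x,x')$; then \eqref{eq: CB 1} for $Z$ gives \eqref{eq: CB 1} for $Z^*$, and the witness $y'$ produced by \eqref{eq: CB 2} for $Z$ already satisfies $Z^*(y,y')\supseteq Z(y,y')$, so \eqref{eq: CB 2} for $Z^*$ holds; \eqref{eq: CB 3} is analogous.

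There is no real obstacle here: the conditions are closed under the standard relational operations exactly as for ordinary (crisp) bisimulation, and the fuzzy degree inequalities $E(x,r,y)\leq E(x',r,y')$ behave well under transitivity and under taking the pointwise maximum that is implicit in the union. The only mild subtlety is in part (3), where one must remember that the existential witness in the first application of \eqref{eq: CB 2} also has positive $E$-value, so that the second application of \eqref{eq: CB 2} is actually licensed; this is immediate from $E(x,r,y)>0$ combined with $E(x,r,y)\leq E(z,r,y_1)$.
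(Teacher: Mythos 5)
Your proof is correct and is precisely the routine verification that the paper omits (it only remarks that ``the proof of this proposition is straightforward''), so there is nothing to compare against: you check Conditions \eqref{eq: CB 1}--\eqref{eq: CB 3} directly for each of the four relations, correctly using the symmetry between \eqref{eq: CB 2} and \eqref{eq: CB 3} for the inverse, chaining witnesses with transitivity of $\leq$ for the composition, and pulling back to a witnessing member of $\mathcal{Z}$ for the union, including the one genuine subtlety that the intermediate degree in part (3) is positive. The only inaccuracy is your side remark that all four relations are ``non-empty by construction'': the composition $Z_1 \circ Z_2$ of two non-empty relations can be empty, but this edge case is glossed over by the paper's own statement as well.
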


The proof of this proposition is straightforward. 
The following corollary immediately follows from this proposition. 

\begin{corollary}\label{cor: JHDKS}
The largest bisimulation of a fuzzy graph exists and is an equivalence relation.
\end{corollary}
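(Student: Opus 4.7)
The plan is to define the candidate largest bisimulation as the union of all bisimulations of $G$, then read off existence and the three equivalence relation properties directly from the four parts of Proposition~\ref{prop: MNGOA}.

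First I would let $\mathcal{Z}$ denote the collection of all bisimulations of $G$. By part~1 of Proposition~\ref{prop: MNGOA}, the identity relation $\{\tuple{x,x} \mid x \in V\}$ belongs to $\mathcal{Z}$, so $\mathcal{Z}$ is non-empty. Applying part~4, the relation $Z^* \defeq \bigcup \mathcal{Z}$ is itself a bisimulation. By construction $Z^*$ contains every bisimulation of $G$, so it is the (unique) largest bisimulation; this settles existence.

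It remains to verify that $Z^*$ is an equivalence relation. Reflexivity is immediate because the identity relation is in $\mathcal{Z}$ and hence is contained in $Z^*$. For symmetry, part~2 of Proposition~\ref{prop: MNGOA} gives that $(Z^*)^{-1}$ is a bisimulation, so $(Z^*)^{-1} \subseteq Z^*$ by maximality, and taking inverses of both sides yields the reverse inclusion, so $(Z^*)^{-1} = Z^*$. For transitivity, part~3 gives that $Z^* \circ Z^*$ is a bisimulation, so $Z^* \circ Z^* \subseteq Z^*$ by maximality, which is exactly the transitivity of $Z^*$.

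There is no real obstacle here: the corollary is a standard consequence of the closure properties listed in the proposition, and the only thing to be careful about is invoking part~4 with a non-empty family, which is why part~1 is needed at the outset.
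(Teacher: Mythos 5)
Your proof is correct and is exactly the argument the paper has in mind: it states that the corollary ``immediately follows'' from Proposition~\ref{prop: MNGOA}, and your write-up simply spells out that standard derivation (union of all bisimulations via parts~1 and~4 for existence, parts~1--3 for reflexivity, symmetry and transitivity). No discrepancies.
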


Given a fuzzy graph $G = \tuple{V, E, L, \SV, \SE}$, by the {\em partition corresponding to the largest bisimulation of~$G$} we mean the partition of $V$ that corresponds to the equivalence relation being the largest bisimulation of~$G$. 

\section{The Skeleton of the Algorithm}
\label{sec: Alg1}

In this section, we present an algorithm for computing the partition corresponding to the largest bisimulation of a given finite fuzzy graph. It is formulated on an abstract level, without implementation details. The aim is to facilitate understanding the algorithm and prove its correctness. Other aspects of the algorithm are presented in the next section.

In the following, let $G = \tuple{V, E, L, \SV, \SE}$ be a finite fuzzy graph. 
We will use 
$\bbP$, $\bbQ$ and $\bbS$ to denote partitions of $V$, 
$X$ and $Y$ to denote non-empty subsets of $V$, 
and $r$ to denote an edge label from $\SE$. 

For $x \in V$, we denote $E(x,r,Y) = \{E(x,r,y) \mid y \in Y\}$. 
We say that $X$ is {\em stable with respect to $\tuple{Y,r}$} (and $G$) if $\sup E(x,r,Y) = \sup E(x',r,Y)$ for all $x,x' \in X$, where the suprema are taken in the complete lattice [0,1].

We say that a partition $\bbP$ is {\em stable with respect to $\tuple{Y,r}$} (and $G$) if every $X \in \bbP$ is stable with respect to $\tuple{Y,r}$. Next, $\bbP$ is {\em stable} (with respect to $G$) if it is stable with respect to $\tuple{Y,r}$ for all $Y \in \bbP$ and $r \in \SE$. 

\begin{lemma}\label{lemma: JHKSS}
Let $\mathcal{Y}$ be a non-empty family of non-empty subsets of $V$. If a partition $\bbP$ is stable with respect to $\tuple{Y,r}$ for all $Y \in \mathcal{Y}$, then it is also stable with respect to $\tuple{\bigcup\mathcal{Y},r}$.
\end{lemma}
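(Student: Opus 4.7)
The plan is to reduce the statement to a standard identity: the supremum of a union equals the supremum of the suprema over the pieces. Fix an arbitrary block $X \in \bbP$ and arbitrary $x, x' \in X$; by the definition of stability of a partition, it suffices to show that $\sup E(x,r,\bigcup\mathcal{Y}) = \sup E(x',r,\bigcup\mathcal{Y})$.

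First I would unfold the definition of $E(x,r,\cdot)$ to observe the set equality
\[
E(x,r,\textstyle\bigcup\mathcal{Y}) \;=\; \{E(x,r,y) \mid y \in \textstyle\bigcup\mathcal{Y}\} \;=\; \bigcup_{Y \in \mathcal{Y}} E(x,r,Y),
\]
and likewise for $x'$. Taking suprema in the complete lattice $[0,1]$ and using the identity $\sup\bigcup_{i} A_i = \sup_i \sup A_i$, this gives
\[
\sup E(x,r,\textstyle\bigcup\mathcal{Y}) \;=\; \sup_{Y \in \mathcal{Y}} \sup E(x,r,Y),
\]
and symmetrically for $x'$.

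Next I would apply the hypothesis: for each $Y \in \mathcal{Y}$, the partition $\bbP$ is stable with respect to $\tuple{Y,r}$, and hence in particular $\sup E(x,r,Y) = \sup E(x',r,Y)$ since $x,x'$ lie in the same block $X$. Substituting these equalities term by term inside the outer supremum over $Y \in \mathcal{Y}$ yields
\[
\sup E(x,r,\textstyle\bigcup\mathcal{Y}) \;=\; \sup_{Y \in \mathcal{Y}} \sup E(x,r,Y) \;=\; \sup_{Y \in \mathcal{Y}} \sup E(x',r,Y) \;=\; \sup E(x',r,\textstyle\bigcup\mathcal{Y}),
\]
which is the required stability of $X$ with respect to $\tuple{\bigcup\mathcal{Y},r}$. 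Since $X$ was arbitrary, $\bbP$ is stable with respect to $\tuple{\bigcup\mathcal{Y},r}$.

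There is no real obstacle here; the only point worth noting is that the argument does not require $\mathcal{Y}$ to be finite or even countable, because $[0,1]$ is a complete lattice so arbitrary suprema exist, and the ``sup of union equals sup of sups'' identity holds in full generality. The non-emptiness assumption on $\mathcal{Y}$ and on each $Y$ ensures that the sets $E(x,r,Y)$ and $E(x,r,\bigcup\mathcal{Y})$ are non-empty, so the suprema are well-defined and can be meaningfully compared.
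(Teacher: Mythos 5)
Your proof is correct and is exactly the argument the paper has in mind; the paper simply declares this lemma's proof trivial and omits it, and your decomposition $E(x,r,\bigcup\mathcal{Y}) = \bigcup_{Y\in\mathcal{Y}} E(x,r,Y)$ followed by the ``sup of a union equals sup of the sups'' identity in the complete lattice $[0,1]$ is the intended (and essentially only) route.
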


The proof of this lemma is trivial. 

By $\bbP_0$ we denote the partition of $V$ that corresponds to the equivalence relation 
\[ \textrm{$\{\tuple{x,x'} \in V^2 \mid$ $L(x) = L(x')$ and $\sup E(x,r,V) = \sup E(x',r,V)$ for all $r \in \SE\}$.} \] 
 
The following lemma provides another look on the considered problem, leading to a constructive computation.  

\begin{lemma}\label{lemma: HGRLA}
$\bbQ$ is the partition corresponding to the largest bisimulation of $G$ iff it is the coarsest stable refinement of~$\bbP_0$.\footnote{This lemma can be generalized by allowing $G$ to be {\em image-finite} or {\em witnessed}. $G$ is image-finite if, for every $x \in V$ and $r \in \SE$, the set $\{y \mid E(x,r,y) > 0\}$ is finite. $G$ is witnessed if, for every $x \in V$, $r \in \SE$ and $Y \subseteq V$, the set $E(x,r,Y)$ has the biggest element.}  
\end{lemma}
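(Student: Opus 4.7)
The plan is to establish that an equivalence relation $\sim$ on $V$ is a bisimulation of $G$ if and only if the partition corresponding to $\sim$ is a stable refinement of $\bbP_0$. Since the refinement ordering on partitions corresponds to reverse inclusion on equivalence relations (coarser partition $=$ larger relation), and since the largest bisimulation of $G$ is an equivalence relation by Corollary~\ref{cor: JHDKS}, this equivalence immediately yields the claim: the coarsest stable refinement of $\bbP_0$ corresponds to the largest equivalence relation that is a bisimulation, which is the largest bisimulation itself.

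For the ``if'' direction, I would assume that $\bbP$ is a stable refinement of $\bbP_0$ with corresponding equivalence relation $\sim$, and verify the three bisimulation conditions. Condition~\eqref{eq: CB 1} is immediate from $\bbP$ refining $\bbP_0$. For condition~\eqref{eq: CB 2}, given $x \sim x'$ and $E(x,r,y) > 0$, I let $Y$ be the block of $\bbP$ containing $y$; stability of $\bbP$ gives $\sup E(x',r,Y) = \sup E(x,r,Y) \geq E(x,r,y) > 0$, and finiteness of $V$ ensures this supremum is attained by some $y' \in Y$, yielding $y \sim y'$ and $E(x',r,y') \geq E(x,r,y)$. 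Condition~\eqref{eq: CB 3} follows symmetrically.

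For the ``only if'' direction, I would assume $\sim$ is an equivalence relation that is a bisimulation of $G$, and let $\bbP$ be its corresponding partition. To show $\bbP$ refines $\bbP_0$, fix $x \sim x'$; condition~\eqref{eq: CB 1} gives $L(x) = L(x')$, and to see $\sup E(x,r,V) = \sup E(x',r,V)$, I pick (by finiteness) some $y$ attaining the supremum for $x$ and apply condition~\eqref{eq: CB 2} to obtain $y'$ with $E(x',r,y') \geq E(x,r,y)$, then use condition~\eqref{eq: CB 3} for the symmetric inequality. To show stability, I run the same argument localised to an arbitrary block $Y \in \bbP$: the witness $y'$ produced by condition~\eqref{eq: CB 2} satisfies $y \sim y'$, hence $y' \in Y$ whenever $y \in Y$, and the required equality $\sup E(x,r,Y) = \sup E(x',r,Y)$ follows.

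The step I expect to require most care is the use of finiteness of $V$ to guarantee that the suprema in question are actually attained, so that the existentially quantified witnesses in conditions~\eqref{eq: CB 2} and \eqref{eq: CB 3} can be instantiated. This is precisely the role played by image-finiteness or witnessedness in the generalisation indicated in the footnote; once this point is settled, the remainder of the argument reduces to a routine translation between the partition and its associated equivalence relation, together with the observation from Lemma~\ref{lemma: JHKSS} that there is no loss in restricting attention to blocks of $\bbP$ rather than arbitrary subsets of $V$.
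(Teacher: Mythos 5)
Your proposal is correct and follows essentially the same route as the paper: both directions reduce to showing that stability plus finiteness of $V$ (so that suprema are attained) yields the existential witnesses in conditions~\eqref{eq: CB 2}--\eqref{eq: CB 3}, and conversely that those conditions force $\sup E(x,r,Y)=\sup E(x',r,Y)$ block by block. The only cosmetic difference is that you make explicit the order-correspondence between partitions and equivalence relations (and state the ``only if'' part for an arbitrary equivalence-relation bisimulation rather than just the largest one), which the paper leaves implicit when it reduces the lemma to its two assertions.
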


\begin{proof}
It is sufficient to prove the following assertions.
\begin{enumerate}
\item If a partition $\bbQ$ is a stable refinement of~$\bbP_0$, then its corresponding equivalence relation is a bisimulation of~$G$. 
\item If $\bbQ$ is the partition corresponding to the largest bisimulation of $G$, then it is a stable refinement of~$\bbP_0$.  
\end{enumerate}

Consider the first assertion. 
Let $\bbQ$ be a stable refinement of~$\bbP_0$ and $Z$ the equivalence relation corresponding to $\bbQ$. 
We need to show that $Z$ satisfies Conditions \eqref{eq: CB 1}--\eqref{eq: CB 3}. Condition~\eqref{eq: CB 1} holds since $\bbQ$ is a refinement of $\bbP_0$. Consider Condition~\eqref{eq: CB 2} and assume that $Z(x,x')$ and $E(x,r,y) > 0$ hold for some $x,x',y \in V$ and $r \in \SE$. Since $\bbQ$ is stable, $[x]_Z$ is stable with respect to $\tuple{[y]_Z,r}$. Hence, $\sup E(x,r,[y]_Z) = \sup E(x',r,[y]_Z)$, and therefore, $E(x,r,y) \leq \sup E(x',r,[y]_Z)$. Since $G$ is finite, it follows that there exists $y' \in V$ such that $Z(y,y')$ holds and $E(x,r,y) \leq E(x',r,y')$. This completes the proof for Condition~\eqref{eq: CB 2}. Since $Z$ is symmetric, Condition~\eqref{eq: CB 3} is equivalent to Condition~\eqref{eq: CB 2} and also holds. 
	
Consider the second assertion. 
Let $\bbQ$ be the partition corresponding to the largest bisimulation $Z$ of $G$. Due to Conditions~\eqref{eq: CB 1}--\eqref{eq: CB 3}, $\bbQ$ is a refinement of $\bbP_0$. It remains to show that $\bbQ$ is stable. Let $X,Y \in \bbQ$, $x,x'\in X$ and $r \in \SE$. We need to show that $\sup E(x,r,Y) = \sup E(x',r,Y)$. Let $y \in Y$. If $E(x,r,y) = 0$, then clearly $E(x,r,y) \leq \sup E(x',r,Y)$. Suppose that $E(x,r,y) > 0$. By Condition~\eqref{eq: CB 2}, there exists $y'$ such that $Z(y,y')$ holds and $E(x,r,y) \leq E(x',r,y')$. Thus, $y' \in Y$ and $E(x,r,y) \leq \sup E(x',r,Y)$. We have proved that $\sup E(x,r,Y) \leq \sup E(x',r,Y)$. Analogously, it can be proved that $\sup E(x',r,Y) \leq \sup E(x,r,Y)$. 
Hence, $\sup E(x,r,Y) = \sup E(x',r,Y)$, which completes the proof. 
\myend
\end{proof}

\smallskip

In the following, let $\emptyset \subset Y' \subset Y$. 
By $\itSplit(X,\tuple{Y',Y,r})$ we denote the coarsest partition of $X$ such that each of its blocks is stable with respect to both $\tuple{Y',r}$ and $\tuple{Y\setminus Y',r}$. Clearly, that partition exists and is computable.  How to implement the function is left for later. If $\bbX = \itSplit(X,\tuple{Y',Y,r})$ contains more than one block, then we say that $X$ is split into $\bbX$ by $\tuple{Y',Y,r}$ (or by $\tuple{Y',r}$ with respect to $Y$ as the context). 
We also define 
\[ \itSplit(\bbP,\tuple{Y',Y,r}) = \bigcup \{\itSplit(X,\tuple{Y',Y,r}) \mid X \in \bbP\}. \]

Clearly, $\itSplit(\bbP,\tuple{Y',Y,r})$ is the coarsest refinement of $\bbP$ that is stable with respect to both $\tuple{Y',r}$ and $\tuple{Y\setminus Y',r}$. 

\begin{lemma}\label{lemma: DFJWA}
Let a stable partition $\bbS$ be a refinement of $\bbP$, which in turn is a refinement of $\bbQ$. 
Let $Y' \in \bbP$ and $Y \in \bbQ$ be blocks such that $Y' \subset Y$. Then, $\bbS$ is a refinement of $\itSplit(\bbP,\tuple{Y',Y,r})$ for any $r \in \SE$. 
\end{lemma}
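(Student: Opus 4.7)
The plan is to show that every block $S \in \bbS$ is contained in a single block of $\itSplit(\bbP,\tuple{Y',Y,r})$. Since $\bbS$ refines $\bbP$, there is a unique $X \in \bbP$ with $S \subseteq X$, and $\itSplit(\bbP,\tuple{Y',Y,r})$ restricts on $X$ to $\itSplit(X,\tuple{Y',Y,r})$. So it suffices to prove that $S$ lies inside one block of $\itSplit(X,\tuple{Y',Y,r})$.

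First I would argue that $Y'$ and $Y\setminus Y'$ are each unions of blocks of $\bbS$. For $Y'$, this follows because $\bbS$ refines $\bbP$ and $Y' \in \bbP$. For $Y\setminus Y'$, note that $\bbS$ refines $\bbP$ which refines $\bbQ$, hence $\bbS$ refines $\bbQ$, so $Y \in \bbQ$ is a union of $\bbS$-blocks; then $Y\setminus Y'$ is the union of those $\bbS$-blocks contained in $Y$ but not in $Y'$ (non-empty since $Y' \subset Y$). Next, since $\bbS$ is stable, every $S \in \bbS$ is stable with respect to $\tuple{T,r}$ for each $T \in \bbS$, so by Lemma~\ref{lemma: JHKSS} the block $S$ is stable with respect to $\tuple{Y',r}$ and with respect to $\tuple{Y\setminus Y',r}$.

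The crucial observation is that a set $X' \subseteq X$ is stable with respect to both $\tuple{Y',r}$ and $\tuple{Y\setminus Y',r}$ iff all elements $x \in X'$ share the same pair of values $\bigl(\sup E(x,r,Y'),\, \sup E(x,r,Y\setminus Y')\bigr)$. Consequently, the coarsest partition of $X$ into blocks stable with respect to both pairs, namely $\itSplit(X,\tuple{Y',Y,r})$, is exactly the partition of $X$ by this pair of suprema. Since all elements of $S$ (being a subset of $X$ that is stable with respect to each of $\tuple{Y',r}$ and $\tuple{Y\setminus Y',r}$) share the same pair of suprema, $S$ must be contained in a single block of $\itSplit(X,\tuple{Y',Y,r})$, and hence in a single block of $\itSplit(\bbP,\tuple{Y',Y,r})$.

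The main obstacle is the characterization of $\itSplit(X,\tuple{Y',Y,r})$ as partitioning $X$ by the pair of supremum values; this deserves an explicit sentence invoking the definition of stability (equal suprema on the relevant set) and the coarsest-ness clause in the definition of $\itSplit$. Once that is in place, the rest of the argument is an easy consequence of Lemma~\ref{lemma: JHKSS} applied twice.
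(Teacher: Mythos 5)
Your proof is correct and follows essentially the same route as the paper's: both arguments show that $Y'$ and $Y\setminus Y'$ are unions of blocks of $\bbS$, invoke Lemma~\ref{lemma: JHKSS} to conclude that $\bbS$ is stable with respect to $\tuple{Y',r}$ and $\tuple{Y\setminus Y',r}$, and then appeal to the coarsest-ness clause in the definition of $\itSplit(\bbP,\tuple{Y',Y,r})$. Your explicit characterization of $\itSplit(X,\tuple{Y',Y,r})$ as the partition of $X$ by the pair $\bigl(\sup E(x,r,Y'),\, \sup E(x,r,Y\setminus Y')\bigr)$ merely spells out a detail the paper leaves implicit in that final appeal.
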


\begin{proof}
Since $\bbS$ is a refinement of $\bbP$ and $\bbP$ is a refinement of $\bbQ$, both $Y'$ and $Y \setminus Y'$ are unions of a number of blocks of $\bbS$. Since $\bbS$ is stable, it is stable with respect to $\tuple{B,r}$ for all blocks $B \in \bbS$. By Lemma~\ref{lemma: JHKSS}, it follows that $\bbS$ is stable with respect to both $\tuple{Y',r}$ and $\tuple{Y\setminus Y',r}$. Since $\itSplit(\bbP,\tuple{Y',Y,r})$ is the coarsest refinement of $\bbP$ that is stable with respect to both $\tuple{Y',r}$ and $\tuple{Y\setminus Y',r}$, it follows that $\bbS$ is a refinement of $\itSplit(\bbP,\tuple{Y',Y,r})$. 
\myend
\end{proof}

We provide Algorithm~\ref{algCompCB} (on page \pageref{algCompCB}) for computing the partition corresponding to the largest bisimulation of~$G$. It starts by initializing $\bbP$ to $\bbP_0$. If $\bbP$ is a singleton, then $\bbP$ is stable and the algorithm returns it as the result. Otherwise, the algorithm repeatedly refines $\bbP$ to make it stable as follows. The algorithm maintains a partition $\bbQ_r$ of $V$, for each $r \in \SE$, such that $\bbP$ is a refinement of $\bbQ_r$ and $\bbP$ is stable with respect to $\tuple{Y,r}$ for all $Y \in \bbQ_r$. 
If at some stage $\bbQ_r = \bbP$ for all $r \in \SE$, then $\bbP$ is stable and the algorithm terminates with that $\bbP$. The variables $\bbQ_r$ are initialized to $\{V\}$ for all $r \in \SE$. In each iteration of the main loop, the algorithm chooses $\bbQ_r \neq \bbP$, $Y \in \bbQ_r$ and $Y' \in \bbP$ such that $Y' \subset Y$ and $|Y'| \leq |Y|/2$, then it replaces $\bbP$ with \mbox{$\itSplit(\bbP, \tuple{Y',Y,r})$} and replaces $Y$ in $\bbQ_r$ with $Y'$ and $Y \setminus Y'$. In this way, the chosen $\bbQ_r$ is refined (and $\bbP$ may also be  refined), so the loop will terminate after a number of iterations. 
The condition $|Y'| \leq |Y|/2$ reflects the idea ``process the smaller half (or a smaller component)'' from Hopcroft's algorithm~\cite{Hopcroft71} and Paige and Tarjan's algorithm~\cite{PaigeT87}. Without using it the algorithm still terminates with a correct result, but the condition is essential for reducing the complexity order of the algorithm. 

\begin{algorithm}[t]
\caption{\CompCB\label{algCompCB}}
\Input{a finite fuzzy graph $G = \tuple{V, E, L, \SV, \SE}$.}
\Output{the partition corresponding to the largest bisimulation of~$G$.}

\smallskip

let $\bbP = \bbP_0$ and $\bbQ_r = \{V\}$ for all $r \in \SE$\;
\lIf{$\bbP = \{V\}$}{\Return $\bbP$} 
\While{there exists $r \in \SE$ such that $\bbQ_r \neq \bbP$}{ 
	choose such an $r$ and a compound block $Y \in \bbQ_r$ with respect to $\bbP$\;
	choose a block $Y' \in \bbP$ such that $Y' \subset Y$ and $|Y'| \leq |Y|/2$\;
	$\bbP := \itSplit(\bbP, \tuple{Y',Y,r})$\label{step: UIIJW 6}\;
	refine $\bbQ_r$ by replacing $Y$ with $Y'$ and $Y \setminus Y'$\label{step: UIIJW 7}\;
}
\red{\Return $\bbP$}\;
\end{algorithm}


\newcommand{\distA}{1.5cm}
\newcommand{\distB}{2.0cm}
\newcommand{\graphA}{
	\begin{center}
		\begin{tikzpicture}[->,>=stealth]
		\node (c) {$c$};
		\node (d) [node distance=\distA, right of=c] {$d$};
		\node (e) [node distance=\distA, right of=d] {$e$};
		\node (f) [node distance=\distA, right of=e] {$f$};
		\node (g) [node distance=\distA, right of=f] {$g$};
		\node (a) [node distance=\distB, above of=d] {$a$};
		\node (b) [node distance=\distB, above of=f] {$b$};
		\node (x) [node distance=0.3cm, above of=a] {};
		\draw (a) to node [right]{\footnotesize{1}} (c);
		\draw (a) to node [right,yshift=-3mm]{\footnotesize{0.7}} (d);
		\draw (a) to node [right]{\footnotesize{1}} (e);
		\draw (b) to node [left,yshift=2mm,xshift=1.5mm]{\footnotesize{1}} (e);
		\draw (b) to node [left]{\footnotesize{1}} (f);
		\draw (b) to node [left,yshift=-2mm,xshift=1mm]{\footnotesize{0.6}} (g);
		\path (f) edge [left] node [left,yshift=1.8mm,xshift=2mm]{\footnotesize{1}} (a);
		\path (c) edge [bend left=40] node [left]{\footnotesize{1}} (a);
		\path (g) edge [bend right=40] node [right]{\footnotesize{1}} (b);
		\end{tikzpicture}
	\end{center}	
}


\begin{example}\label{example: ill-1}
	Consider the fuzzy graph $G$ illustrated below and specified as $G = \tuple{V, E, L, \SV, \SE}$, 
	where $\SV = \emptyset$, $\SE = \{r\}$, $V = \{a,\ldots,f\}$, $L$ is the empty labeling function (which labels each vertex with the empty fuzzy set), and $E$ is specified by the edges and their fuzzy degrees displayed in the picture. 
	
	\medskip
	
	\graphA
	
	\medskip
	
	Consider the run of Algorithm~\ref{algCompCB} on this fuzzy graph. 
	\begin{itemize}
		\item At the beginning, we have that 
		$\bbP = \bbP_0 = \{\{a,b,c,f,g\}$, $\{d,e\}\}$ and $\bbQ_r = \{V\}$. 
		
		\item During the first iteration of the main loop, $Y = V$ and $Y' = \{d,e\}$. 
		As the effects of this iteration, $\bbP$ is refined to $\{\{a,b\}$, $\{c,f,g\}$, $\{d,e\}\}$, and $\bbQ_r$ is changed to $\bbP_0$. 
		
		\item During the second iteration of the main loop, we have that $Y = \{a,b,c,f,g\}$ and $Y' = \{a,b\}$. 
		$\bbP$ is not further refined, but $\bbQ_r$ is changed to $\bbP$. 
		\item The algorithm terminates with the result $\{\{a,b\}$, $\{c,f,g\}$, $\{d,e\}\}$. 
		\myend
	\end{itemize}
\end{example}

Some properties of Algorithm~\ref{algCompCB} are stated below.

\begin{lemma}\label{lemma: invariants}
Let $\bbS$ be the coarsest stable refinement of~$\bbP_0$. 
Consider Algorithm~\ref{algCompCB} and suppose that $\bbP_0 \neq \{V\}$. 	
The following assertions are invariants of the main loop of the algorithm:
\begin{enumerate}
\item $\bbP$ is a refinement of $\bbP_0$ and $\bbQ_r$ for all $r \in \SE$;
\item $\bbS$ is a refinement of $\bbP$;
\item $\bbP$ is stable with respect to $\tuple{B,r}$ for all $r \in \SE$ and $B \in \bbQ_r$.
\end{enumerate}
\end{lemma}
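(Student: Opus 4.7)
The plan is to prove all three invariants simultaneously by induction on the number of iterations of the main loop, establishing the base case from the initialization and then verifying each invariant is preserved by one iteration using the earlier lemmas.

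For the base case, after the initialization we have $\bbP = \bbP_0$ and $\bbQ_r = \{V\}$ for each $r \in \SE$. Invariant~1 is trivial (every partition refines $\{V\}$ and $\bbP_0$ refines itself). Invariant~2 is immediate: $\bbS$ is a stable refinement of $\bbP_0 = \bbP$ by its definition. Invariant~3 reduces to showing $\bbP_0$ is stable with respect to $\tuple{V,r}$ for every $r \in \SE$, which follows directly from the definition of $\bbP_0$ (two vertices in the same block of $\bbP_0$ satisfy $\sup E(x,r,V) = \sup E(x',r,V)$).

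For the inductive step, suppose all three invariants hold at the start of an iteration that chooses $r$, $Y \in \bbQ_r$, and $Y' \in \bbP$ with $Y' \subset Y$. Let $\bbP' = \itSplit(\bbP, \tuple{Y',Y,r})$ and let $\bbQ_r'$ be obtained from $\bbQ_r$ by replacing $Y$ with $Y'$ and $Y\setminus Y'$; all other $\bbQ_{r''}$ are unchanged. For invariant~1, note $\bbP'$ refines $\bbP$ by construction, hence refines $\bbP_0$ and every $\bbQ_{r''}$ with $r'' \neq r$ by the inductive hypothesis. For $\bbQ_r'$, observe that $\bbP$ refines $\bbQ_r$ (so $Y$, and hence $Y \setminus Y'$, is a union of blocks of $\bbP$), and $Y'$ itself is a block of $\bbP$; thus the new blocks of $\bbQ_r'$ are unions of blocks of $\bbP$, which are in turn unions of blocks of $\bbP'$. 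Invariant~2 follows from Lemma~\ref{lemma: DFJWA} applied to $\bbS$, $\bbP$, $\bbQ_r$, $Y'$, $Y$, and $r$: the hypotheses are exactly what the inductive hypothesis gives us, and the conclusion is that $\bbS$ refines $\bbP' = \itSplit(\bbP, \tuple{Y',Y,r})$.

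For invariant~3, I would use the straightforward observation that if a partition is stable with respect to $\tuple{B,r''}$, then any refinement of it is too (because the stability condition is universally quantified over pairs within each block). For $r'' \neq r$ and $B \in \bbQ_{r''}' = \bbQ_{r''}$, the inductive hypothesis gives stability of $\bbP$ with respect to $\tuple{B,r''}$, and this transfers to its refinement $\bbP'$. For $r'' = r$ and a block $B \in \bbQ_r'$ that was unchanged (i.e., $B \in \bbQ_r$, $B \neq Y$), the same argument applies. For the two new blocks $Y'$ and $Y \setminus Y'$ of $\bbQ_r'$, stability of $\bbP'$ with respect to $\tuple{Y',r}$ and $\tuple{Y\setminus Y',r}$ holds by the very definition of $\itSplit(\bbP, \tuple{Y',Y,r})$. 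The main conceptual step is invariant~2, which relies on Lemma~\ref{lemma: DFJWA}; the rest is bookkeeping. Note also that the condition $|Y'| \leq |Y|/2$ plays no role in correctness here — it is only needed for the complexity analysis — so I would not invoke it in the proof of these invariants.
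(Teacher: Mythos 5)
Your proof is correct and follows essentially the same route as the paper: the paper simply asserts that invariants 1 and 3 ``clearly hold'' and derives invariant 2 from Lemma~\ref{lemma: DFJWA} together with invariant 1, which is exactly the structure of your induction, just stated tersely. Your expanded base case and inductive step (including the observation that stability with respect to $\tuple{B,r}$ is inherited by refinements, and that the condition $|Y'| \leq |Y|/2$ is irrelevant to correctness) are all accurate fillings-in of what the paper leaves implicit.
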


By Corollary~\ref{cor: JHDKS} and Lemma~\ref{lemma: HGRLA}, the coarsest stable refinement of~$\bbP_0$ exists. 
The first and third invariants clearly hold. The second invariant follows from Lemma~\ref{lemma: DFJWA} and the first invariant.

\begin{theorem}\label{theorem: UDFND}
Algorithm~\ref{algCompCB} always terminates with a correct result.
\end{theorem}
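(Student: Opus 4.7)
The plan is to establish termination and correctness separately, leaning on Lemma~\ref{lemma: invariants} (which provides three invariants) and Lemma~\ref{lemma: HGRLA} (which reduces the task to computing the coarsest stable refinement of $\bbP_0$). First I would dispose of the trivial case $\bbP_0 = \{V\}$: here the algorithm returns $\{V\}$ immediately, and by the very definition of $\bbP_0$ the partition $\{V\}$ is stable (all suprema $\sup E(x,r,V)$ agree), so Lemma~\ref{lemma: HGRLA} yields correctness at once. Henceforth I assume $\bbP_0 \neq \{V\}$ so that Lemma~\ref{lemma: invariants} applies.

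For termination, the key observation is that step~\ref{step: UIIJW 7} strictly refines $\bbQ_r$: the chosen $Y'$ and $Y \setminus Y'$ are both non-empty (since $\emptyset \subset Y' \subset Y$), so $|\bbQ_r|$ strictly increases during that iteration, while the other $\bbQ_{r''}$ and all of $\bbP$ can only be (non-strictly) refined by earlier invocations. Because $\sum_{r \in \SE} |\bbQ_r|$ is bounded above by $|\SE|\cdot |V|$, the main loop must exit after finitely many iterations.

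For correctness, let $\bbS$ denote the coarsest stable refinement of $\bbP_0$, whose existence is guaranteed by Corollary~\ref{cor: JHDKS} together with Lemma~\ref{lemma: HGRLA}. When the loop exits, the guard fails, hence $\bbQ_r = \bbP$ for every $r \in \SE$. By invariant~3, $\bbP$ is stable with respect to $\tuple{B,r}$ for every $B \in \bbQ_r = \bbP$ and every $r \in \SE$, so $\bbP$ is stable. By invariant~1, $\bbP$ is a refinement of $\bbP_0$. Thus $\bbP$ is a stable refinement of $\bbP_0$, so by the minimality of $\bbS$ it is a refinement of $\bbS$. Conversely, invariant~2 states that $\bbS$ is a refinement of $\bbP$. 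Combining the two containments yields $\bbP = \bbS$, and Lemma~\ref{lemma: HGRLA} identifies $\bbS$ with the partition corresponding to the largest bisimulation of~$G$.

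The only non-routine point is the termination argument; the subtlety is that refining $\bbP$ in step~\ref{step: UIIJW 6} could, a priori, destroy the strict-refinement progress on the $\bbQ_r$'s, but since step~\ref{step: UIIJW 7} replaces a single block $Y$ of $\bbQ_r$ with two non-empty blocks and this change is independent of how $\bbP$ evolves, the monovariant $\sum_r |\bbQ_r|$ still strictly increases per iteration. Everything else is a direct bookkeeping argument from the stated invariants.
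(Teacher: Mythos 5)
Your proof is correct and follows essentially the same route as the paper: dispose of the case $\bbP_0=\{V\}$, derive termination from the fact that the partitions $\bbQ_r$ are strictly refined and cannot be refined forever, and then combine the three invariants of Lemma~\ref{lemma: invariants} with the extremality of the coarsest stable refinement $\bbS$ to conclude $\bbP=\bbS$ and invoke Lemma~\ref{lemma: HGRLA}. Your termination argument is merely a more explicit version (via the monovariant $\sum_r|\bbQ_r|$) of the paper's one-line observation.
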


\begin{proof}
It is easy to see that, if $\bbP_0 = \{V\}$, then $\{V\}$ is the partition corresponding to the largest bisimulation of~$G$ and the assertion of the theorem holds. Assume that $\bbP_0 \neq \{V\}$. 

By the first assertion of Lemma~\ref{lemma: invariants}, it is an invariant of the algorithm that $\bbP$ is a refinement of $\bbQ_r$ for all $r \in \SE$. The loop of the algorithm must terminate because the partitions $\bbQ_r$ (for $r \in \SE$) cannot be refined forever. 
	
Let $\bbS$ be the coarsest stable refinement of~$\bbP_0$. 
By the second assertion of Lemma~\ref{lemma: invariants}, $\bbS$ is a refinement of the final $\bbP$. 

By the first assertion of Lemma~\ref{lemma: invariants}, the final $\bbP$ is a refinement of $\bbP_0$. 
At the end of the algorithm, $\bbP = \bbQ_r$ for all $r \in \SE$. Hence, by the third assertion of Lemma~\ref{lemma: invariants}, the final $\bbP$ is stable. 
Thus, the final $\bbP$ is a stable refinement of $\bbP_0$. Therefore, it is a refinement of $\bbS$. Together with the assertion in the above paragraph, this implies that the final $\bbP$ is equal to $\bbS$. 
By Lemma~\ref{lemma: HGRLA}, it follows that $\bbP$ is the partition corresponding to the largest bisimulation of~$G$. This completes the proof.
\myend
\end{proof}

\section{Implementation Details and Complexity Analysis}
\label{sec: Alg1-details}

In this section, we show how to implement Algorithm~\ref{algCompCB} so that its complexity is of order $O((m\log{l} + n)\log{n})$, where $n$, $m$ and $l$ are the number of vertices, the number of nonzero edges and the number of different fuzzy degrees of edges of the input graph $G$, respectively. 
Apart from the mentioned idea ``process a smaller component'', another key for getting that complexity order is to efficiently process the operation $\itSplit(\bbP, \tuple{Y',Y,r})$ at the statement~\ref{step: UIIJW 6} of Algorithm~\ref{algCompCB}. 
Like the Hopcroft algorithm~\cite{Hopcroft71} and the Paige and Tarjan algorithm~\cite{PaigeT87}, for that operation we also start from the vertices from $Y'$ and look backward through the edges coming to them, without scanning $Y$. The processing is, however, quite sophisticated. To enable full understanding of the implementation and its complexity analysis, we use the object-oriented approach and decide to describe the data structures in detail.

\subsection{Data Structures}
\label{sec: DS}

Algorithm~\ref{algCompCB} was formulated on an abstract level. Using the object-oriented approach, we describe how to get an efficient implementation of this algorithm by using a number of classes. In the description given below, we refer to the input graph $G = \tuple{V, E, L, \SV, \SE}$ and the variables $\bbP$ and $\bbQ_r$ ($r \in \SE$) used in the algorithm, which represent partitions of $V$. The classes are listed below:
\begin{itemize}
\item $\Vertex$: the type for the vertices of $G$; 
\item $\Edge$: the type for the edges of $G$; 
\item $\Block$: the type for the blocks of $\bbP$; 
\item $\SuperBlock$: the type for the blocks of $\bbQ_r$ ($r \in \SE$);
\item \red{$\VertexList$ and $\SuperBlockList$: the types for doubly linked lists of elements of the type $\Vertex$ or $\SuperBlock$, respectively;} 
\item \red{$\EdgeList$ and $\BlockList$: the types for lists of elements of the type $\Edge$ or $\Block$, respectively;} 
\item $\Partition$: the type for $\bbP$, defined as $\BlockList$;
\item $\SuperPartition$: the type for $\bbQ_r$ ($r \in \SE$);
\item $\BlockEdge$: the type for objects specifying information about edges connecting a vertex to a block of $\bbQ_r$.
\end{itemize}

We call objects of the type $\SuperBlock$, $\SuperPartition$ or $\BlockEdge$ super-blocks, super-partitions and block-edges, respectively. 
We give below details for nontrivial classes in the above list. 
As in the Java language, attributes of objects are primitive values or references. 

\bigskip

\noindent
{\bf Vertex}. This class has the following \red{instance} attributes. 
\begin{itemize}
\item $\vertexID$ is the ID of the vertex (a natural number or a string).  
\item \mbox{$\vertexBlock: \Block$} is the block of $\bbP$ that contains the vertex.  
\item \red{\mbox{$\nekst: \Vertex$ and $\prev: \Vertex$} are the next vertex and the previous vertex in the doubly linked list that contains the current vertex.}
\item \mbox{$\comingEdges: \SE \to \EdgeList$} is \red{a map that associates each $r \in \SE$ with a list of all edges labeled by $r$ that come to the vertex}. 
\item \mbox{$\processed: \bool$} is a flag for internal processing. 
\end{itemize}
The constructor $\Vertex(id')$ sets $id$ to $id'$, $\vertexBlock$, \red{$\nekst$ and $\prev$} to $\Null$, $\comingEdges$ to a newly created empty \red{map}, and $\processed$ to $\false$. The class also has a static method $\getVertex(id)$ that returns the vertex with the given ID. It uses a \red{static} attribute to store the collection of the vertices that have been created. 

\bigskip

\noindent
{\bf Edge}. This class has the following \red{instance} attributes. 
\begin{itemize}
\item $\edgeLabel: \SE$
\item $\edgeOrigin: \Vertex$
\item $\edgeDest: \Vertex$
\item \mbox{$\degree: [0,1]$} is the value of $E(x,r,y)$, where $x$, $r$ and $y$ are the $\edgeOrigin$, $\edgeLabel$ and $\edgeDest$ of the edge, respectively. 
\item \mbox{$\blockEdge: \BlockEdge$} specifies information about the set of edges labeled by $r$ from $x$ to the vertices of $Y$, where $r$ and $x$ are the $\edgeLabel$ and $\edgeOrigin$ of the edge, respectively, and $Y$ is the block of $\bbQ_r$ that contains the $\edgeDest$ of the edge (via a block of $\bbP$). 
\end{itemize}
The constructor $\Edge(r,x,y,d,bE)$ sets the above listed attributes to the parameters, respectively, and then adds the current edge to the list $\edgeDest.\comingEdges\red{[r]}$. 

\bigskip

\noindent
{\bf BlockEdge}. 
As mentioned above, an object of this class gives information about the set of edges with a label $r$ from a vertex $x$ to a block $Y \in \bbQ_r$. It is defined as an extended map of type \mbox{$[0,1] \to \NN$}, whose keys are the values of $E(x,r,y)$ for $y \in Y$. The value of a key $d \in [0,1]$ in the map is the number of vertices $y \in Y$ such that $E(x,r,y) = d$. Apart from the map, the class has two \red{instance} attributes of type $\BlockEdge$, with names and descriptions given below. 
\begin{itemize}
	\item $\departingBlockEdge$: When the block $Y$ of $\bbQ_r$ is going to be replaced by $Y\setminus Y'$ and $Y'$, the current block-edge (i.e., the object {\em this}) changes to a block-edge with the destination $Y\setminus Y'$, a new block-edge with the destination $Y'$ is created, and the attribute $\departingBlockEdge$ of the current block-edge is set to that new block-edge.   
	\item $\sourceBlockEdge$: This attribute is a converse of $\departingBlockEdge$. That is, the current block-edge is equal to the attribute $\departingBlockEdge$ of the object $\sourceBlockEdge$ if they are set. 
\end{itemize}
Apart from the get/set methods for the above attributes, the class $\BlockEdge$ also has the following methods. 
\begin{itemize}
	\item $\pushKey(d)$: This method increases the value of the key $d$ in the map by~1. If the key is absent, it is added to the map and its value is set to~1. 
	\item $\popKey(d)$: This method decreases the value of the key $d$ in the map by~1, under the assumption that the key is present. If the value becomes 0, then the key is deleted from the map. 
	\item $\maxKey()$: This method returns the biggest key of the map if the map is not empty, and 0 otherwise. 
\end{itemize}
The default constructor $\BlockEdge()$ creates an empty map and sets the additional attributes to $\Null$. 
The constructor $\BlockEdge(bE)$ differs from the default in that it also sets $\sourceBlockEdge$ to $bE$. 

\bigskip

\noindent
{\bf Block}. Objects of this class are the blocks of $\bbP$ (the current partition of $V$ used in the algorithm). The class has the following \red{instance} attributes.  
\begin{itemize}
\item \mbox{$\vertices: \VertexList$} is \red{a} list of vertices of the block. 
\item \mbox{$\partition: \Partition$} is a reference to $\bbP$. 
\item \mbox{$\superBlocks: \SE \to \SuperBlock$} is a map that associates each $r \in \SE$ with the block of $\bbQ_r$ that contains the current block. 
\item \mbox{$\departingBlocks_1: [0,1] \to \VertexList$} and  
\mbox{$\departingBlocks_2: [0,1] \to \VertexList$} are maps whose keys' values specify the splitting to be done for the current block. They are described in more detail below.
\end{itemize}

Let $X$ denote the $\vertices$ of the current block. Consider the statement~\ref{step: UIIJW 6} of Algorithm~\ref{algCompCB} and let $\bbX = \itSplit(X,\tuple{Y',Y,r})$. For each $X_i \in \bbX$, let $d_{i,1} = \sup E(x,r,Y \setminus Y')$ and $d_{i,2} = \sup E(x,r,Y')$ for any $x \in X_i$ (the choice of $x$ does not matter, since $X_i$ is stable with respect to both $\tuple{Y\setminus Y', r}$ and $\tuple{Y', r}$). As an invariant of the algorithm, $X$ is stable with respect to $\tuple{Y,r}$, and hence $\max(d_{i,1}, d_{i,2})$ does not depend on the choice of $X_i$ from $\bbX$. The maps $\departingBlocks_1$ and $\departingBlocks_2$ are set so that, if $d_{i,1} \geq d_{i,2}$, then $\departingBlocks_2[d_{i,2}]$ is a list representing the set $X_i$, else $\departingBlocks_1[d_{i,1}]$ is a list representing the set $X_i$. The computation of these maps will be specified later. 

\red{The class $\Block$ has the following constructor.}

\medskip

\begin{algorithm}[H]
\SetAlgoVlined
\SetKwProg{constructor}{Constructor}{:}{}

\constructor{$\Block(\vertices'$, $\partition'$, $\superBlocks')$\label{contrBlock}}{
$\vertices := \vertices'$, $\partition := \partition'$\;
$\partition.\add(this)$\;
set the attributes $\superBlocks$, $\departingBlocks_1$ and $\departingBlocks_2$ to newly created empty maps of appropriate types\;
\red{\ForEach{$x \in \vertices$}{
	$x.\block := this$\;
}}
\If{$\superBlocks' \neq \Null$}{
	\ForEach{$r \in \SE$}{
		$\superBlocks[r] := \superBlocks'[r]$\;
		$\superBlocks[r].\addBlock(this)$\;
	}
}
} 
\end{algorithm}

\medskip

Let the class $\Block$ have the following static method, whose parameter is a key's value of one of the maps $\departingBlocks_1$ and $\departingBlocks_2$ of a block of~$\bbP$.

\medskip

\begin{algorithm}[H]
\SetAlgoVlined
\SetKwProg{StaticMethod}{Static method}{:}{}
	
\StaticMethod{$\createBlock(vl: \VertexList)$}{\markRed  
	let $x$ be the first element of $vl$\;
	$bx := x.\vertexBlock$\;
	$\New \Block(vl, bx.\partition, bx.\superBlocks)$\;
}
\end{algorithm}

\bigskip

\noindent
{\bf SuperBlock}. Each object of this class represents a block of $\bbQ_r$ for some $r \in \SE$ ($\bbQ_r$ is used in the algorithm as a partition of $V$). It consists of a number of blocks of $\bbP$. The class has the following members. 
\begin{itemize}
\item \mbox{$\blocks: \BlockList$} is a list of the blocks of $\bbP$ that compose the current super-block. 
\item \mbox{$\superPartition: \SuperPartition$} is a reference to $\bbQ_r$. 
\item \red{\mbox{$\nekst: \SuperBlock$ and $\prev: \SuperBlock$} are the next super-block and the previous super-block in the doubly linked list that contains the current super-block;}
\item $SuperBlock(superPartition')$ is the constructor that initializes $\blocks$ to a newly created empty list, sets $\superPartition$ to $superPartition'$, \red{$\nekst$ and $\prev$ to $\Null$,} and adds the current super-block to $\superPartition$ by calling $\superPartition.\addSuperBlock(this)$. 
\item $\Size()$ is the method that returns $\blocks.\Size()$ (the number of blocks of $\bbP$ that compose the current super-block). 
\item $\compound()$ is the method that returns the truth of $\Size() > 1$. That is, this Boolean method returns $true$ iff the current super-block is compound with respect to~$\bbP$. 
\item $\smallerBlock()$ is a method that can be called only when the current super-block is compound. It compares the first two blocks of the current super-block and returns the smaller one (or any one when their sizes are equal).
\item $\addBlock(b)$ is the method that adds the block $b$ to the list $\blocks$. If the addition causes that $\Size() = 2$, then the method also moves the current super-block from $\superPartition.\simpleSuperBlocks$ to $\superPartition.\compoundSuperBlocks$. 
\item $\removeBlock(b)$ is the method that removes the block $b$ from the list $\blocks$. If the removal causes that $\Size() = 1$, then the method also moves the current super-block from $\superPartition.\compoundSuperBlocks$ to $\superPartition.\simpleSuperBlocks$. \red{This method will be called only for $b$ being the result returned by the method $\smallerBlock()$ and therefore can be executed in constant time.}
\item $\createSuperBlock(sp,b,r)$ is a static method that creates a new super-block $sb$ for the super-partition $sp$, adds the block $b$ to $sb$ and makes $sb$ the super-block of $b$ in $\bbQ_r$. 
It consists of the statements $sb := \New \SuperBlock(sp)$, $sb.\addBlock(b)$, and \mbox{$b.\superBlocks[r] := sb$}.
\end{itemize}

\medskip

\noindent
{\bf SuperPartition}. An object of this class represents $\bbQ_r$ for some $r \in \SE$ ($\bbQ_r$ is used in the algorithm as a partition of $V$). It consists of a number of super-blocks (i.e., objects of type $\SuperBlock$). The class has the following \red{instance} attributes. 
\begin{itemize}
\item \mbox{$\compoundSuperBlocks: \SuperBlockList$} is a list consisting of all the compound super-blocks (each of which consists of more than one block).  
\item \mbox{$\simpleSuperBlocks: \SuperBlockList$} is a list consisting of all the simple super-blocks (each of which contains at most one block). 
\end{itemize}
The constructor $\SuperPartition()$ initializes the above mentioned attributes to newly created empty lists. The class has the method $\addSuperBlock(sb)$, which adds the super-block $sb$ to the list $\compoundSuperBlocks$ or $\simpleSuperBlocks$ depending on whether $sb$ is compound or not.

\subsection{Initialization}
\label{sec: init}

Our revision of Algorithm~\ref{algCompCB} uses Procedure $\Initialize$ (on page~\pageref{procInitialize}), which sets up the global variables $\bbP$ and $\bbQ$, where $\bbQ$ is a map of type \mbox{$\SE \to \SuperPartition$} and $\bbQ[r]$ means $\bbQ_r$. 

\begin{procedure}
\caption{Initialize(G)\label{procInitialize}} 

construct \red{a vector \mbox{$\vertices: \Vector(\Vertex)$} that contains all} vertices of $G$ by using the constructor $\Vertex(id)$\label{step: HGJWA 1}\; 
construct \red{a vector \mbox{$\edges: \Vector(\Edge)$} that contains all} edges of $G$ by calling \red{the static method $\Vertex.\getVertex(id)$} and \red{the constructor} $\Edge(r,x,y,d,\Null)$ appropriately (this also sets up the lists of coming edges for the vertices)\label{step: HGJWA 2}\;

$\bbP := \New  \Partition()$\label{step: HGJWA 3}\;
create an empty map $\bbQ: \SE \to \SuperPartition$\label{step: HGJWA 4}\;
\red{create an empty map $msb : \SE \to \SuperBlock$\label{step: HGJWA 5}\;}
\ForEach{$r \in \SE$\label{step: HGJWA 6}}{
	$\bbQ[r] := \New \SuperPartition()$\label{step: HGJWA 7}\;
	\red{$msb[r] := \SuperBlock(\bbQ[r])$\label{step: HGJWA 8}}\;
}
create an empty map $\blockEdges: \Vertex \times \SE \to \BlockEdge$\label{step: HGJWA 11}\; 
\ForEach{$x$ in $\vertices$ and $r \in \SE$\label{step: HGJWA 12}}{
	$\blockEdges[x,r] := \New \BlockEdge()$\label{step: HGJWA 13}\; 
}
\ForEach{$e$ in $\edges$\label{step: HGJWA 14}}{
	$e.\blockEdge := \blockEdges[e.\edgeOrigin, e.\edgeLabel]$\label{step: HGJWA 15}\;
	$e.\blockEdge.\pushKey(e.\degree)$\label{step: HGJWA 16}\;
}
 
create a set $\bbX$ of objects of type $\VertexList$ such that, for every $x$ and $x'$ in $\vertices$, there exists $X \in \bbX$ such that both $x$ and $x'$ belong to $X$ iff $L(x.\vertexID) = L(x'.\vertexID)$ and $\blockEdges[x,r].\maxKey() = \blockEdges[x',r].\maxKey()$ for all $r \in \SE$\label{step: HGJWA 17}\; 

\lForEach{$X \in \bbX$\label{step: HGJWA 18}}{\red{$\Block(X, \bbP, msb)$}}
\end{procedure}

Let's analyze the complexity of Procedure $\Initialize$. 
Recall that the sizes of $\SE$ and $\SV$ are assumed to be bounded by a constant. 
The time needed for running the steps is as follows: 
\ref{step: HGJWA 1}:~$O(n\log{n})$;  
\ref{step: HGJWA 2}:~$O(m\log{n})$;  
\mbox{\ref{step: HGJWA 3}--\ref{step: HGJWA 11}}:~$O(1)$;   
\mbox{\ref{step: HGJWA 12}--\ref{step: HGJWA 13}}:~$O(n \log{n})$;   
\mbox{\ref{step: HGJWA 14}--\ref{step: HGJWA 16}}:~$O(m \log{n})$;   
\ref{step: HGJWA 17}:~$O(n \log{n})$; 
and \ref{step: HGJWA 18}:~$O(n)$. 
Thus, the time complexity of the procedure $\Initialize(G)$ is of order \mbox{$O((m+n)\log{n})$}. 

\subsection{The Revised Algorithm}

We revise Algorithm~\ref{algCompCB} to obtain Algorithm~\ref{algCompCBt} (on page~\pageref{algCompCBt}), which uses the classes specified in Section~\ref{sec: DS} and the procedure $\Initialize(G)$ given in Section~\ref{sec: init}. The new algorithm uses global variables $\bbP$ and $\bbQ$ for the subroutines, where $\bbQ$ is a map of type \mbox{$\SE \to \SuperPartition$}. $\bbP$ and $\bbQ[r]$ (for $r \in \SE$) correspond to the variables $\bbP$ and $\bbQ_r$ used in Algorithm~\ref{algCompCB}, respectively. 
The call of $\Split(Y',Y,r)$ in the statement~\ref{step: JHSKZ 10} of Algorithm~\ref{algCompCBt} is aimed to get the effects of the statements~\ref{step: UIIJW 6} and~\ref{step: UIIJW 7} of Algorithm~\ref{algCompCB}. The definition of this procedure, given under Algorithm~\ref{algCompCBt}, calls four subroutines in subsequent steps, which are discussed and defined below. 

\begin{figure*}
\begin{algorithm}[H]
\caption{\CompCBt\label{algCompCBt}}
\Input{a finite fuzzy graph $G = \tuple{V, E, L, \SV, \SE}$.}
\Output{the partition corresponding to the largest bisimulation of~$G$.}

\BlankLine
$\Initialize(G)$\;
\lIf{$\bbP.\Size() = 1$\label{step: JHSKZ 2}}{\Return $\bbP$} 
$\changed := \true$\;
\While{$\changed$}{
	$\changed := \false$\;
	\ForEach{$r \in \SE$}{
		\While{$\Not \bbQ[r].\compoundSuperBlocks.\Empty()$\label{step: JHSKZ 7}}{
			$Y := \bbQ[r].\compoundSuperBlocks.\first()$\;
			$Y' := Y.\smallerBlock()$\; 
			$\Split(Y',Y,r)$\label{step: JHSKZ 10}\tcp*{defined below}
			$\changed := \true$\;
		}
	}
}
\Return $\bbP$\label{step: JHSKZ 12}\; 
\end{algorithm}

\begin{procedure}[H]
	\caption{Split($Y'\!:\!\Block,\ Y\!:\!\SuperBlock,\ r\!:\!\SE$)\label{proc}}
	\red{create a vector $\verticesYp$ consisting of all the vertices of $Y'$\;}
	$\ComputeBlockEdges(\red{\verticesYp},r)$\;
	$\ComputeSubblocks(\red{\verticesYp},r)$\;
	$\DoSplitting(Y',\red{\verticesYp,} Y,r)$\;
	$\ClearAuxiliaryInfo(\red{\verticesYp},r)$\;
\end{procedure}
\end{figure*}

The call $\ComputeBlockEdges(\red{\verticesYp},r)$ in the procedure $\Split(Y',Y,r)$ prepares block-edges that will connect vertices (via normal edges labeled by~$r$) to the two future super-blocks of the super-partition $\bbQ[r]$, which have contents $Y\setminus Y'$ or $Y'$, respectively. Each block-edge that connects a vertex $x$ to the super-block $Y$ via normal edges labeled by~$r$ is updated to become the one that
\begin{itemize}
\item plays the role of a block-edge connecting $x$ to the future super-block with contents $Y \setminus Y'$ of $\bbQ[r]$, and 
\item has the attribute $\departingBlockEdge$ set to a newly created block-edge intended for connecting $x$ to the future super-block with contents $Y'$ of $\bbQ[r]$. 
\end{itemize}
The procedure $\ComputeBlockEdges(\red{\verticesYp},r)$ is formally defined on page~\pageref{proc: ComputeBlockEdges}. It uses the statement {\em continue} with the same semantics as in the C or Java language. To facilitate understanding this procedure, the reader may recall the description and specification of the class $\BlockEdge$.

\begin{figure*}
\begin{procedure}[H]
\caption{ComputeBlockEdges($\red{\verticesYp},r$)\label{proc: ComputeBlockEdges}}
\ForEach{$y \in \red{\verticesYp}$ and $e \in y.\comingEdges\red{[r]}$}{
	$bE := e.\blockEdge$\;
	\If{$bE.\departingBlockEdge = \Null$}{
		$bE.\departingBlockEdge = \New \BlockEdge(bE)$\;
	}
	$dbE = bE.\departingBlockEdge$\;
	$bE.\popKey(e.\degree)$, $dbE.\pushKey(e.\degree)$\;
}
\end{procedure}

\smallskip

\begin{procedure}[H]
	\caption{ComputeSubblocks($\red{\verticesYp},r$)\label{proc: ComputeSubblocks}}
	\ForEach{$y \in \red{\verticesYp}$ and $e \in y.\comingEdges\red{[r]}$}{
		$x := e.\edgeOrigin$, $bx := x.\block$\; 
		\lIf{$x.\processed$}{\Continue}
		$bE := e.\blockEdge$, $dbE := bE.\departingBlockEdge$\;
		$d_1 := bE.\maxKey()$, $d_2 := dbE.\maxKey()$\; 
		\uIf{$d_1 \geq d_2$}{
			\If{$d_2 \notin bx.\departingBlocks_2.\keys()$}{
				$bx.\departingBlocks_2[d_2] := \New \VertexList()$\;
			}
			\red{move $x$ from $bx.\vertices$ to $bx.\departingBlocks_2[d_2]$\;}
		}
		\Else{
			\If{$d_1 \notin bx.\departingBlocks_1.\keys()$}{
				$bx.\departingBlocks_1[d_1] := \New \VertexList()$\;
			}
			\red{move $x$ from $bx.\vertices$ to $bx.\departingBlocks_1[d_1]$\;}
		}
		$x.\processed := \true$\;
	}
\end{procedure}
\end{figure*}

\begin{figure*}
\begin{procedure}[H]
	\caption{DoSplitting($Y',\red{\verticesYp,} Y,r$)\label{proc: DoSplitting}}
	$Y.\removeBlock(Y')$\;
	$\createSuperBlock(Y.\superPartition, Y', r)$\;
	
	\BlankLine
	
	\ForEach{$y \in \red{\verticesYp}$ and $e \in y.\comingEdges\red{[r]}$}{
		\If{\markRed $e.\blockEdge.\departingBlockEdge \neq \Null$}{
			$e.\blockEdge := e.\blockEdge.\departingBlockEdge$\;
		} 
		$x := e.\edgeOrigin$, $bx := x.\block$\; 
		\If{$\Not (bx.\departingBlocks_1.\Empty()$ and $bx.\departingBlocks_2.\Empty())$}{
			{\markRed
			\If{$bx.\vertices.\Empty()$\label{step: HGPEB-0}}{
				\uIf{$\Not bx.\departingBlocks_1.\Empty()$}{
					let $d$ be the first key of the map $bx.\departingBlocks_1$\;
					$bx.\vertices := bx.\departingBlocks_1[d]$\;
					remove the key $d$ from the map $bx.\departingBlocks_1$\;
				}
				\Else {
					let $d$ be the first key of the map $bx.\departingBlocks_2$\;
					$bx.\vertices := bx.\departingBlocks_2[d]$\;
					remove the key $d$ from the map $bx.\departingBlocks_2$\label{step: HGPEB-1}\;
				}
			}
			}
			\ForEach{$d \in bx.\departingBlocks_1.\keys()$}{
				$\createBlock(bx.\departingBlocks_1[d])$\label{step: HGFDZ 9}\; 
			}
			\ForEach{$d \in bx.\departingBlocks_2.\keys()$}{
				$\createBlock(bx.\departingBlocks_2[d])$\label{step: HGFDZ 11}\;
			} 
			$bx.\departingBlocks_1.\clear()$\label{step: HGFDZ 12}\;
			$bx.\departingBlocks_2.\clear()$\label{step: HGFDZ 13}\; 
		}
	}
\end{procedure}

\smallskip

\begin{procedure}[H]
	\caption{ClearAuxiliaryInfo($\red{\verticesYp},r$)\label{proc: ClearAuxiliaryInfo}}
	
	\ForEach{$y \in \red{\verticesYp}$ and $e \in y.\comingEdges\red{[r]}$}{
		$e.\edgeOrigin.\processed := \false$\; 
		$bE := e.\blockEdge$, $sbE := bE.\sourceBlockEdge$\;
		\If{$sbE\neq \Null$}{
			$sbE.\departingBlockEdge := \Null$\;
			$bE.\sourceBlockEdge := \Null$\;
		}
	}
\end{procedure}
\end{figure*}


The call $\ComputeSubblocks(\red{\verticesYp},r)$ in the second statement of the procedure $\Split(Y',Y,r)$ computes for each related block $bx$ of $\bbP$ the attributes $\departingBlocks_1$ and $\departingBlocks_2$, which are the maps specified in the description of the class $\Block$. Let $X$ denote $bx.\vertices$. Then, recall that each key's value of the maps should identify a subblock resulting from splitting $X$ using $\tuple{Y',Y,r}$. The computation is done using the block-edges prepared by the call $\ComputeBlockEdges(\red{\verticesYp},r)$ discussed above.  Namely, for an arbitrary vertex $x \in X$, let $bE$ denote the block-edge connecting $x$ to $Y$ via normal edges labeled by $r$, and let $dbE = bE.\departingBlockEdge$, $d_1 = bE.\maxKey()$ and $d_2 = dbE.\maxKey()$. Then, the subblock containing $x$ is uniquely identified by the pair $(d_1,d_2)$. That is, two vertices of $X$ should be put into the same subblock iff they give the same pair $(d_1,d_2)$. As an invariant of the loop in the statement~\ref{step: JHSKZ 7} of the algorithm, $X$ is stable with respect to $\tuple{Y,r}$. Hence, $\max(d_1,d_2)$ is the same for all $x \in X$. Let $d = \max(d_1,d_2)$. If $d_1 \geq d_2$, then $d_1 = d$ and the subblock containing $x$ is uniquely identified by $d_2$ and this case (i.e., the truth of $d_1 \geq d_2$), so we put $x$ into $bx.\departingBlocks_2[d_2]$ (i.e., $bx.\departingBlocks_2[d_2]$ is used to represent that subblock). Dually, if $d_1 < d_2$, then we put $x$ into $bx.\departingBlocks_1[d_1]$. The procedure $\ComputeSubblocks(\red{\verticesYp},r)$ is formally defined on page~\pageref{proc: ComputeSubblocks}. 


The first two statements of the procedure $\Split(Y',Y,r)$ only prepare the block-edges and subblocks to be created for the splitting. The splitting itself is done by the procedure $\DoSplitting(Y',\red{\verticesYp,} Y,r)$ defined on page~\pageref{proc: DoSplitting}. 
\red{If the attribute $\vertices$ of a block $bx$ is an empty list, then the statements~\eqref{step: HGPEB-0}--\eqref{step: HGPEB-1} leave one of the subblocks of $bx$ in $bx$ (so that we do not have to delete $bx$ after the splitting).}


The attribute $\processed$ is used as an auxiliary Boolean flag for vertices in the processing of $\ComputeSubblocks(\red{\verticesYp},r)$ in order to avoid redundant computations. The attributes $\departingBlockEdge$ and $\sourceBlockEdge$ of block-edges as well as the attributes $\departingBlocks_1$ and $\departingBlocks_2$ of blocks are also used as auxiliary data during the processing of $\Split(Y',Y,r)$. The two latter attributes are cleared by the procedure $\DoSplitting(Y',\red{\verticesYp,} Y,r)$ itself. The remaining auxiliary data are cleared by 
the procedure $\ClearAuxiliaryInfo(\red{\verticesYp},r)$ (defined on page~\pageref{proc: ClearAuxiliaryInfo}). 

As discussed in this subsection, Algorithm~\ref{algCompCBt} reflects the run of Algorithm~\ref{algCompCB}. Hence, we arrive at the following counterpart of Theorem~\ref{theorem: UDFND}. 

\begin{theorem}
Algorithm~\ref{algCompCBt} is a correct algorithm for computing the partition corresponding to the largest bisimulation of a given finite fuzzy graph.
\end{theorem}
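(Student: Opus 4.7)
The plan is to reduce the correctness of Algorithm~\ref{algCompCBt} to that of Algorithm~\ref{algCompCB} (Theorem~\ref{theorem: UDFND}) by a step-by-step simulation argument. I would maintain the invariant that, at the start of each iteration of the outer \textbf{while}-loop of Algorithm~\ref{algCompCBt}, the data structures encode a pair $(\bbP, \{\bbQ_r\}_{r\in\SE})$ reachable by some valid run of Algorithm~\ref{algCompCB} on the same input $G$, together with a representation invariant: for every $x \in V$, $r \in \SE$ and super-block $Y_0 \in \bbQ[r]$, the $\BlockEdge$ object attached to the pair $(x,r)$ and $Y_0$ encodes the map $d \mapsto |\{y \in Y_0 : E(x,r,y) = d\}|$ restricted to positive keys, so that $\maxKey()$ returns $\sup E(x,r,Y_0)$.

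The base case of this invariant is established by $\Initialize(G)$ (Procedure~\ref{procInitialize}): steps~\ref{step: HGJWA 11}--\ref{step: HGJWA 16} install exactly the map described above with $Y_0 = V$, and the grouping in step~\ref{step: HGJWA 17} classifies vertices by $L(x.\vertexID)$ together with the values $\blockEdges[x,r].\maxKey() = \sup E(x,r,V)$ for $r \in \SE$, which is precisely the definition of $\bbP_0$. Moreover, the choices made inside the main loop of Algorithm~\ref{algCompCBt} (pick $r$ with a non-empty $\bbQ[r].\compoundSuperBlocks$, the first compound super-block $Y$, and $Y' = Y.\smallerBlock()$) are all legal choices for Algorithm~\ref{algCompCB}: $\bbQ_r \neq \bbP$ iff some super-block of $\bbQ_r$ is compound, and $\smallerBlock()$ returns a block of size at most $|Y|/2$ as required.

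The heart of the proof is to show that one call of $\Split(Y',Y,r)$ effects exactly the update of statements~\ref{step: UIIJW 6}--\ref{step: UIIJW 7} of Algorithm~\ref{algCompCB}, i.e., replacing $\bbP$ by $\itSplit(\bbP, \tuple{Y',Y,r})$ and splitting $Y$ inside $\bbQ[r]$ into $Y'$ and $Y\setminus Y'$. I would walk through the four subroutines in order. $\ComputeBlockEdges$ scans the edges into $Y'$ and, for each origin $x$ with block-edge $bE$ into $Y$, migrates keys from $bE$ to a fresh $dbE = bE.\departingBlockEdge$; after the scan, $bE$ represents $E(x,r,\cdot)$ restricted to $Y\setminus Y'$ and $dbE$ its restriction to $Y'$, maintaining the representation invariant in a strengthened form. $\ComputeSubblocks$ then, for each visited origin $x$ (processed once thanks to the $\processed$ flag), reads $d_1 = bE.\maxKey() = \sup E(x,r,Y\setminus Y')$ and $d_2 = dbE.\maxKey() = \sup E(x,r,Y')$; by the third invariant of Lemma~\ref{lemma: invariants}, $\max(d_1,d_2) = \sup E(x,r,Y)$ is constant over every block $bx \in \bbP$, so the pair $(d_1,d_2)$ is uniquely determined by the case ``$d_1 \geq d_2$ or not'' together with the single coordinate used as a key in $\departingBlocks_2$ or $\departingBlocks_1$, respectively. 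Two vertices of $bx$ land in the same bucket iff they yield the same pair $(d_1,d_2)$, which is exactly the equivalence defining the coarsest refinement of $bx.\vertices$ stable with respect to both $\tuple{Y',r}$ and $\tuple{Y\setminus Y',r}$, namely $\itSplit(bx.\vertices, \tuple{Y',Y,r})$.

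It remains to check that $\DoSplitting$ materialises the splitting: it removes $Y'$ from $Y$, installs a fresh super-block carrying $Y'$ in $\bbQ[r]$, redirects each visited edge's $\blockEdge$ to the corresponding $dbE$, and, for each affected block $bx$ with a non-empty $\departingBlocks_i$, turns each bucket into a new $\Block$ via $\createBlock$; the auxiliary statements beginning at line~\ref{step: HGPEB-0} handle the case when $bx.\vertices$ becomes empty by recycling $bx$ for one of its departing subblocks, which preserves the resulting partition. The trailing $\ClearAuxiliaryInfo$ call merely resets $\processed$, $\departingBlockEdge$ and $\sourceBlockEdge$ and is partition-invariant. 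Combining these observations, one iteration of Algorithm~\ref{algCompCBt} simulates one iteration of Algorithm~\ref{algCompCB} on the abstract state, so termination with the correct result follows from Theorem~\ref{theorem: UDFND}. The principal obstacle is the analysis of $\ComputeSubblocks$: one must verify that the pair $(d_1,d_2)$, read via the pointer chain $e \to e.\blockEdge \to \departingBlockEdge$, genuinely classifies vertices of $bx$ according to $\itSplit$, which relies on the representation invariant surviving the in-place surgery performed by $\ComputeBlockEdges$ together with the stability property from Lemma~\ref{lemma: invariants} to make the one-sided bucketing actually used by the algorithm equivalent to bucketing by the full pair.
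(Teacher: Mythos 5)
Your proposal is correct and takes essentially the same route as the paper: the paper justifies this theorem only by the preceding discussion that Algorithm~\ref{algCompCBt} ``reflects the run'' of Algorithm~\ref{algCompCB}, i.e., exactly the simulation argument you spell out (representation invariant for the $\BlockEdge$ maps, $\Initialize$ realizing $\bbP_0$, and one call of $\Split$ realizing statements~\ref{step: UIIJW 6}--\ref{step: UIIJW 7}). Your write-up is in fact more detailed than the paper's own justification, and the point you flag about the one-sided bucketing in $\ComputeSubblocks$ being equivalent to bucketing by the pair $(d_1,d_2)$ is precisely the argument the paper gives informally.
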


\subsection{Complexity Analysis}
\label{sec: comp-anal}

Recall that $n = |V|$, \red{$m = |E|$} and $l = |\{E(x,r,y)$ : $\tuple{x,r,y} \in$ \mbox{$V \times \SE \times V\}|$}. Assume that $l \geq 2$. 
Also recall that the sizes of $\SE$ and $\SV$ are assumed to be bounded by a constant. 
We now estimate the time complexity of Algorithm~\ref{algCompCBt} in terms of $n$, $m$ and $l$. 

Given \mbox{$Y':\Block$}, we write $|Y'|$ and $|\!\uparrow_r\!Y'|$ to denote the number of \red{all} vertices of $Y'$ and the number of edges labeled by $r$ and coming to the vertices of $Y'$. Given \mbox{$Y:\SuperBlock$}, we also write $|Y|$ to denote the number of vertices of $Y$. 

Observe that each iteration of the loop in the procedure $\ComputeBlockEdges(\red{\verticesYp},r)$ or $\ComputeSubblocks(\red{\verticesYp},r)$ takes time of order $O(\log{l})$. Thus, the complexities of these procedures are of order $O(|Y'| + |\!\uparrow_r\!Y'|\cdot \log{l})$. 

Consider a call of the procedure $\DoSplitting(Y',\red{\verticesYp},Y,r)$. The total time the statements~\ref{step: HGFDZ 9} and~\ref{step: HGFDZ 11} of this procedure take is of order $O(|\!\uparrow_r\!Y'|)$, because the created blocks are pairwise disjoint and the size of their union is bounded by $|\!\uparrow_r\!Y'|$. \red{For a similar reason, the total time the statements~\ref{step: HGPEB-0}-\ref{step: HGPEB-1}, \ref{step: HGFDZ 12} and \ref{step: HGFDZ 13} take is also of order $O(|\!\uparrow_r\!Y'|)$.} Thus, the complexity of this procedure is of order $O(|Y'| + |\!\uparrow_r\!Y'|)$. 

Clearly, the procedure $\ClearAuxiliaryInfo(\red{\verticesYp},r)$ also takes time of order $O(|Y'| + |\!\uparrow_r\!Y'|)$. 
Summing up, the complexity of the procedure $\Split(Y',Y,r)$ is of order $O(|Y'| + |\!\uparrow_r\!Y'|\cdot \log{l})$. Dividing this cost for \red{the} individual vertices of $Y'$, we can assume that the cost assigned to each vertex $y$ of $Y'$ in a call $\Split(Y',Y,r)$ is of order $O(1 + |r^{-1}(y)|\cdot \log{l})$, where $r^{-1}(y) = \{x \in V \mid$ $E(x,r,y) > 0\}$. 

Fix arbitrary $y \in V$ and $r \in \SE$. Let's estimate the number of calls $\Split(Y',Y,r)$ during the execution of Algorithm~\ref{algCompCBt} for $G$ such that $y$ is a vertex of $Y'$. Denote it by $f(y,r)$. Observe that, if $\Split(Y',Y,r)$ is such a call at some step, then the next call of $\Split$ with that property at some later step, if it exists, must be $\Split(U',U,r)$ with $U$ being a super-block whose set of vertices is a subset of the set of vertices of $Y'$. We have $|U'| \leq |U|/2$ and $|U| \leq |Y'| \leq |Y|/2$. Hence, $|U'| \leq |Y'|/2$. Extending this understanding, we conclude that $f(y,r) \leq \log{n}$. 

Therefore, the total time taken by all the calls of $\Split$ in the statement~\ref{step: JHSKZ 10} of the execution of Algorithm~\ref{algCompCBt} for $G$ is of order 
\[ O(
	\sum_{y \in V} \sum_{r \in \SE} \log{n} \cdot (1 + |r^{-1}(y)|\cdot \log{l}) 
), \]
which is of order $O((m\log{l} + n)\log{n})$. 
As estimated in Section~\ref{sec: init}, the call $\Initialize(G)$ takes time of order \mbox{$O((m+n)\log{n})$}. Hence, we arrive at the following theorem. 

\begin{theorem}
Algorithm~\ref{algCompCBt} has a time complexity of order $O((m\log{l} + n)\log{n})$.
\end{theorem}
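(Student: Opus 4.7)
The plan is to combine (a) a per-call cost analysis of \Split together with its four subroutines, (b) an amortized argument of Hopcroft type that bounds how often any fixed vertex can lie in the smaller half $Y'$, and (c) the already-established cost of $\Initialize$.

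First I would fix $y \in V$ and $r \in \SE$, and bound the charge incurred by $y$ during a single call $\Split(Y',Y,r)$ in which $y \in Y'$. Inspecting the four subroutines, each of $\ComputeBlockEdges$, $\ComputeSubblocks$, $\DoSplitting$ and $\ClearAuxiliaryInfo$ loops over pairs $(y,e)$ with $y \in \verticesYp$ and $e \in y.\comingEdges$, doing a constant amount of bookkeeping per edge, plus $O(\log l)$ for the $\pushKey/\popKey/\maxKey$ operations on a map keyed by fuzzy degrees (whose size is bounded by $l$). The slightly delicate points are the statements that create new subblocks in $\DoSplitting$ (steps~\ref{step: HGFDZ 9}, \ref{step: HGFDZ 11}) and the special case that keeps one subblock inside $bx$ (steps \ref{step: HGPEB-0}--\ref{step: HGPEB-1}); for these I would argue that the subblocks produced by one call are pairwise disjoint and that the total size of their union is bounded by $|\!\uparrow_r\!Y'|$, so their total cost is absorbed into $O(|Y'| + |\!\uparrow_r\!Y'|)$. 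Adding the logarithmic map cost gives a total of $O(|Y'| + |\!\uparrow_r\!Y'|\cdot \log l)$ for the whole $\Split(Y',Y,r)$.

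Next I would amortize this cost across the vertices of $Y'$. Assign to each $y \in Y'$ a charge $O(1 + |r^{-1}(y)|\cdot \log l)$, where $r^{-1}(y) = \{x : E(x,r,y) > 0\}$; summing over $y \in Y'$ recovers $O(|Y'| + |\!\uparrow_r\!Y'|\cdot \log l)$. The key combinatorial step — and in my view the heart of the proof — is bounding, for each fixed $y$ and $r$, the number $f(y,r)$ of calls $\Split(Y',Y,r)$ in which $y \in Y'$. The selection in line~\ref{step: JHSKZ 7}--\ref{step: JHSKZ 10} enforces $|Y'| \le |Y|/2$; moreover, after such a call $y$ lies inside the new super-block $Y'$ of $\bbQ[r]$, so any subsequent call $\Split(U',U,r)$ with $y \in U'$ must have $U \subseteq Y'$ and $|U'| \le |U|/2 \le |Y'|/2$. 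Iterating, the size of the super-block of $\bbQ[r]$ containing $y$ at the time of each such call is at least halved, giving $f(y,r) \le \log n$. This is the Hopcroft ``smaller half'' trick and is where I expect the main obstacle to lie, since it requires identifying the right quantity being halved (the super-block in $\bbQ[r]$, not the block in $\bbP$).

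Finally I would sum up: the total time spent in all invocations of \Split at line~\ref{step: JHSKZ 10} is bounded by
\[
O\!\left(\sum_{y\in V}\sum_{r\in \SE} f(y,r)\cdot (1 + |r^{-1}(y)|\cdot \log l)\right)
= O\!\left(\log n\cdot \sum_{y\in V}\sum_{r\in \SE} (1 + |r^{-1}(y)|\cdot \log l)\right),
\]
which, using $|\SE| = O(1)$ and $\sum_{y,r}|r^{-1}(y)| = m$, simplifies to $O((m\log l + n)\log n)$. Adding the cost $O((m+n)\log n)$ of $\Initialize$ from Section~\ref{sec: init} does not change this order, and the outer loops of Algorithm~\ref{algCompCBt} only contribute $O(1)$ work per iteration outside the \Split calls themselves. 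This yields the claimed bound.
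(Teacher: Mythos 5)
Your proposal is correct and follows essentially the same route as the paper's own analysis: the per-call bound $O(|Y'| + |\!\uparrow_r\!Y'|\cdot\log l)$ for $\Split$ (with the disjointness argument absorbing the subblock-creation cost), the per-vertex charge $O(1 + |r^{-1}(y)|\cdot\log l)$, the bound $f(y,r)\le\log n$ via the halving of the super-block of $\bbQ[r]$ containing $y$, and the final summation together with the $O((m+n)\log n)$ cost of $\Initialize$. No gaps.
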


If $l$ is bounded by a constant (e.g., when $l = 2$ and $G$ is a crisp graph), then the time complexity of Algorithm~\ref{algCompCBt} is of order $O((m+n)\log{n})$. If \mbox{$m \geq n$}, then, taking $l = n^2$ for the \modified{worst} case, the complexity of the algorithm is of order $O(m\log^2(n))$. 

\section{Crisp Bisimulations with Counting Successors}
\label{sec: setting-counting-successors}

In this section, we study the problem of computing crisp bisimulations for fuzzy graphs under the setting with counting successors. In the following, let $G = \tuple{V, E, L, \SV, \SE}$ be a finite fuzzy graph. 
If not stated otherwise, we still use $\bbP$, $\bbQ$ and $\bbS$ to denote partitions of $V$, $X$ and $Y$ to denote non-empty subsets of $V$, and $r$ to denote an edge label from $\SE$. 

We write $x\!\uparrow_r$ to denote the set $\{y \in V \mid E(x,r,y) > 0\}$. 

A non-empty binary relation $Z \subseteq V \times V$ is called a {\em crisp auto-bisimulation of $G$ with counting successors}, or a {\em s-bisimulation} of $G$ for short, if Condition~\eqref{eq: CB 1} and the following one hold for all $x,x' \in V$ and $r \in \SE$:
\begin{align}
& \textrm{if $Z(x,x')$ holds, then there exists a bijection $h: x\!\uparrow_r\ \to\ x'\!\uparrow_r$ such that,} \nonumber \\[-0.5ex]
& \textrm{for every $y \in x\!\uparrow_r$, $Z(y,h(y))$ holds and $E(x,r,y) = E(x',r,h(y))$.} \label{eq: xCB 4}
\end{align}

By using~\cite[Lemma~6.1]{BSDL-P-LOGCOM}, it can be shown that the above definition coincides with the one of~\cite[Section~4.1]{jBSfDL2} for the case when $\mI = \mI'$, $\mI$ is a finite fuzzy interpretation, and $\Phi = \{Q_n \mid n \in \NN \setminus \{0\}\}$.  

The objective is to develop an efficient algorithm for computing the largest s-bisimulation of a given finite fuzzy graph.

A counterpart of Proposition~\ref{prop: MNGOA} in which $G$ is finite and ``bisimulation(s) of $G$'' is replaced by ``s-bisimulation(s) of $G$'' also holds and can easily be proved. 
As a consequence, we obtain the following counterpart of Corollary~\ref{cor: JHDKS}.

\begin{corollary}\label{cor: JHDKS2}
The largest s-bisimulation of a finite fuzzy graph exists and is an equivalence relation.
\end{corollary}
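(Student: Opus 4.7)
The plan is to deduce this corollary in exactly the same way that Corollary~\ref{cor: JHDKS} was derived from Proposition~\ref{prop: MNGOA}, using the counterpart of that proposition for s-bisimulations mentioned in the paragraph above the corollary statement.

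First I would define $Z$ to be the union of all s-bisimulations of $G$. The identity relation $\{\tuple{x,x} \mid x \in V\}$ is an s-bisimulation (take $h$ to be the identity on $x\!\uparrow_r$, which is a bijection witnessing Condition~\eqref{eq: xCB 4} trivially), so the family of s-bisimulations is non-empty. Applying the union part of the counterpart proposition, $Z$ is itself an s-bisimulation, and it is the largest by its very construction.

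To show that $Z$ is an equivalence relation, I would verify the three properties separately, exactly as in the proof underlying Corollary~\ref{cor: JHDKS}. Reflexivity is immediate since the identity relation is an s-bisimulation and hence contained in $Z$. For symmetry, the counterpart proposition gives that $Z^{-1}$ is an s-bisimulation, so by maximality $Z^{-1} \subseteq Z$; applying the same reasoning to $Z^{-1}$ yields $Z \subseteq Z^{-1}$, hence $Z = Z^{-1}$. For transitivity, the counterpart proposition gives that $Z \circ Z$ is an s-bisimulation, so $Z \circ Z \subseteq Z$ by maximality.

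The only nontrivial work hides in the counterpart of Proposition~\ref{prop: MNGOA} itself, which the paper explicitly treats as routinely provable. The main obstacle, were one to write it out, is handling the witnessing bijections for Condition~\eqref{eq: xCB 4}: one uses the identity bijection for reflexivity, $h^{-1}$ for the inverse relation, a composition $h_2 \circ h_1$ chosen from the two intermediate bijections for the composed relation, and, for the union, one picks for each related pair $(x,x')$ the bijection associated with whichever s-bisimulation in the family witnesses $Z(x,x')$. Finiteness of $G$ ensures that $x\!\uparrow_r$ and $x'\!\uparrow_r$ are finite sets of equal cardinality at every step, so these set-theoretic manipulations go through without delicacy. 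Once that counterpart proposition is in hand, the corollary itself requires no further ideas.
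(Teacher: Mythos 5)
Your proposal is correct and follows essentially the same route as the paper, which derives the corollary from the counterpart of Proposition~\ref{prop: MNGOA} for s-bisimulations (stated as easily provable) exactly as Corollary~\ref{cor: JHDKS} is derived from the original proposition. Your sketch of the witnessing bijections for the identity, inverse, composition, and union cases correctly fills in the routine details the paper leaves implicit.
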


\newcommand{\xrpY}[4]{#1\!\uparrow_{#2(#3)}\!#4}
\newcommand{\xrdY}[1]{\xrpY{#1}{r}{d}{Y}}

We define $\xrdY{x} = \{y \in Y \mid E(x,r,y) = d\}$. 

We say that $X$ is {\em s-stable with respect to $\tuple{Y,r}$} (and $G$) if 
\mbox{$|\xrdY{x}| = |\xrdY{x'}|$} for all $x,x' \in X$ and $d \in (0,1]$. 

We say that $\bbP$ is {\em s-stable with respect to $\tuple{Y,r}$} (and $G$) if every $X \in \bbP$ is s-stable with respect to $\tuple{Y,r}$. In addition, $\bbP$ is {\em s-stable} (with respect to $G$) if it is s-stable with respect to $\tuple{Y,r}$ for all $Y \in \bbP$ and $r \in \SE$. 

By $\bbP_1$ we denote the partition of $V$ that corresponds to the equivalence relation 
\[ \textrm{$\{\tuple{x,x'} \in V^2 \mid$ $L(x) = L(x')$ and $|\xrpY{x}{r}{d}{V}| = |\xrpY{x'}{r}{d}{V}|$ for all $r \in \SE$ and $d \in (0,1]\}$.} \] 

The following lemma is a counterpart of Lemma~\ref{lemma: HGRLA}, which allows to look at the problem from another point of view. 

\begin{lemma}\label{lemma: HGRLA2}
$\bbQ$ is the partition corresponding to the largest s-bisimulation of $G$ iff it is the coarsest s-stable refinement of~$\bbP_1$.  
\end{lemma}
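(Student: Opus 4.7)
The plan is to mirror the proof of Lemma~\ref{lemma: HGRLA}, splitting the biconditional into the two implications

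\begin{enumerate}
\item If a partition $\bbQ$ is an s-stable refinement of $\bbP_1$, then its corresponding equivalence relation is an s-bisimulation of $G$.
\item If $\bbQ$ is the partition corresponding to the largest s-bisimulation of $G$, then $\bbQ$ is an s-stable refinement of $\bbP_1$.
\end{enumerate}

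Granting this equivalence, the lemma follows in the same manner as before: the ``coarsest'' s-stable refinement corresponds to the union (hence the largest) of the s-bisimulations, using the counterpart of Proposition~\ref{prop: MNGOA} announced before Corollary~\ref{cor: JHDKS2}.

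For the first implication, I would let $Z$ be the equivalence relation corresponding to an s-stable refinement $\bbQ$ of $\bbP_1$. Condition~\eqref{eq: CB 1} holds because $\bbQ$ refines $\bbP_1$ and membership of $x,x'$ in the same block of $\bbP_1$ forces $L(x)=L(x')$. For Condition~\eqref{eq: xCB 4}, suppose $Z(x,x')$, and decompose
\[
x\!\uparrow_r \;=\; \bigsqcup_{Y \in \bbQ,\, d \in (0,1]} \xrpY{x}{r}{d}{Y}
\qquad\text{and likewise for } x'\!\uparrow_r.
\]
By s-stability of $\bbQ$ with respect to each $\tuple{Y,r}$, we have $|\xrpY{x}{r}{d}{Y}| = |\xrpY{x'}{r}{d}{Y}|$ for every $Y \in \bbQ$ and $d \in (0,1]$, so there is a bijection $h_{Y,d}$ from $\xrpY{x}{r}{d}{Y}$ onto $\xrpY{x'}{r}{d}{Y}$. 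These pieces are pairwise disjoint and cover $x\!\uparrow_r$ (resp.\ $x'\!\uparrow_r$), because $G$ is finite and hence only finitely many degrees $d$ occur; gluing the $h_{Y,d}$'s yields the required $h$, which preserves the $Z$-class (since $h(y) \in Y$ whenever $y \in Y$) and preserves the edge degree.

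For the second implication, let $\bbQ$ correspond to the largest s-bisimulation $Z$. That $\bbQ$ refines $\bbP_1$ is immediate: if $Z(x,x')$, then $L(x)=L(x')$ by \eqref{eq: CB 1}, and the bijection $h$ in \eqref{eq: xCB 4} restricts to a bijection $\xrpY{x}{r}{d}{V} \to \xrpY{x'}{r}{d}{V}$ for every $d$, which gives the counts required by the definition of $\bbP_1$. For s-stability, fix $X,Y \in \bbQ$, $x,x' \in X$, $r \in \SE$, $d \in (0,1]$. Applying \eqref{eq: xCB 4} to $Z(x,x')$ gives a bijection $h: x\!\uparrow_r \to x'\!\uparrow_r$ preserving degree and $Z$-class; restricting $h$ to $\xrdY{x}$ lands in $\xrdY{x'}$ (degree is preserved, and $h(y) \in [y]_Z = Y$), so $|\xrdY{x}| \leq |\xrdY{x'}|$. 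Applying the same argument to $h^{-1}$ (or exploiting symmetry of $Z$) yields the reverse inequality.

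The main obstacle is the gluing step in the first implication: one must be careful that the proposed $h$ really is a bijection. This follows because the family $\{\xrpY{x}{r}{d}{Y} \mid Y \in \bbQ,\, d \in (0,1]\}$ actually partitions $x\!\uparrow_r$ (each successor belongs to a unique $Z$-class and has a unique degree), and because $G$ being finite ensures that only finitely many nonzero degrees arise, so the disjoint union of the finitely many bijections $h_{Y,d}$ is well-defined. A minor auxiliary observation that I would record as a counterpart of Lemma~\ref{lemma: JHKSS} is that s-stability with respect to each $\tuple{Y,r}$ for $Y$ ranging over a partition of $V$ entails s-stability with respect to $\tuple{V,r}$, since for fixed $d$, both $|\xrpY{x}{r}{d}{V}|$ and $|\xrpY{x'}{r}{d}{V}|$ split as sums over the blocks; this is what is needed to conclude that $\bbQ$ refines $\bbP_1$ in the second implication.
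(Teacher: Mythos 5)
Your proposal is correct and follows essentially the same route as the paper's own proof: both directions are handled by the same decomposition, with the bijections $h_{Y,d}$ glued over the blocks $Y\in\bbQ$ and degrees $d$ for the forward implication, and the bijection $h$ from Condition~\eqref{eq: xCB 4} restricted to $\xrdY{x}$ for the converse. The extra care you take with the gluing step and with why $\bbQ$ refines $\bbP_1$ only makes explicit what the paper leaves as ``it is easy to see.''
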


\begin{proof}
It is sufficient to prove the following assertions:
\begin{enumerate}
	\item If a partition $\bbQ$ is an s-stable refinement of~$\bbP_1$, then its corresponding equivalence relation is an s-bisimulation of~$G$. 
	\item If $\bbQ$ is the partition corresponding to the largest s-bisimulation of $G$, then it is an s-stable refinement of~$\bbP_1$.  
\end{enumerate}

Consider the first assertion. 
Let $\bbQ$ be an s-stable refinement of~$\bbP_1$ and $Z$ the equivalence relation corresponding to $\bbQ$. We need to show that $Z$ satisfies Conditions \eqref{eq: CB 1} and~\eqref{eq: xCB 4}. Condition~\eqref{eq: CB 1} holds since $\bbQ$ is a refinement of $\bbP_1$. Consider Condition~\eqref{eq: xCB 4}. Let $x,x' \in V$, $r \in \SE$ and assume that $Z(x,x')$ holds. 
Thus, there exists $X \in \bbQ$ such that $x,x' \in X$. Furthermore, for every $Y \in \bbQ$ and $d \in (0,1]$, \mbox{$\xrdY{x} = \xrdY{x'}$}. For every $Y \in \bbQ$ and $d \in (0,1]$, let $h_{Y,d}$ be a bijection between $\xrdY{x}$ and $\xrdY{x'}$. Let \mbox{$h: x\!\uparrow_r\ \to\ x'\!\uparrow_r$} be the function specified as follows: for $y \in x\!\uparrow_r$, let $d = E(x,r,y)$ (we have that $d > 0$) and let $Y$ be the block of $\bbQ$ such that $y \in Y$, then define $h(y) = h_{Y,d}(y)$. It is easy to see that $h$ is a bijection and, for every $y \in x\!\uparrow_r$, $Z(y,h(y))$ holds and $E(x,r,y) = E(x',r,h(y))$. 
This completes the proof of Condition~\eqref{eq: xCB 4}. 

Consider the second assertion. 
Let $\bbQ$ be the partition corresponding to the largest s-bisimulation $Z$ of $G$. Due to Conditions~\eqref{eq: CB 1} and~\eqref{eq: xCB 4}, $\bbQ$ is a refinement of $\bbP_1$. 
It remains to show that $\bbQ$ is s-stable. Let $X,Y \in \bbQ$, $x,x'\in X$, $r \in \SE$ and $d \in (0,1]$. We need to show that \mbox{$\xrdY{x} = \xrdY{x'}$}.
By definition, then there exists a bijection $h: x\!\uparrow_r\ \to\ x'\!\uparrow_r$ such that, for every $y \in x\!\uparrow_r$, $Z(y,h(y))$ holds and $E(x,r,y) = E(x',r,h(y))$. Thus, \mbox{$h(\xrdY{x}) = \xrdY{x'}$}, and consequently, \mbox{$\xrdY{x} = \xrdY{x'}$}. This completes the proof. 
\myend
\end{proof}

Let $\emptyset \subset Y' \subset Y$. 
By $\sSplit(X,\tuple{Y',Y,r})$ we denote the coarsest partition of $X$ such that each of its blocks is s-stable with respect to both $\tuple{Y',r}$ and $\tuple{Y\setminus Y',r}$. Clearly, that partition exists and is computable. We also define 
\[ \sSplit(\bbP,\tuple{Y',Y,r}) = \bigcup \{\sSplit(X,\tuple{Y',Y,r}) \mid X \in \bbP\}. \]
It can be seen that $\sSplit(\bbP,\tuple{Y',Y,r})$ is the coarsest refinement of $\bbP$ that is s-stable with respect to both $\tuple{Y',r}$ and $\tuple{Y\setminus Y',r}$. 

Let \sCompCB be the algorithm obtained from Algorithm~\ref{algCompCB} by replacing $\bbP_0$ and $\itSplit$ with  $\bbP_1$ and $\sSplit$, respectively. 
An example illustrating this algorithm is given below.

\begin{example}\label{example: ill-2}
	Consider the execution of the algorithm \sCompCB for the fuzzy graph $G$ specified in Example~\ref{example: ill-1}. 
	\begin{itemize}
		\item At the beginning, we have that 
		$\bbP = \bbP_1 = \{\{a\}$, $\{b\}$, $\{c,f,g\}$, $\{d,e\}\}$ and $\bbQ_r = \{V\}$. 
		
		\item Assume that $Y' = \{a\}$ in the first iteration of the main loop. 
		As the effects of this iteration, $\bbP$ is refined to $\{\{a\}, \{b\}, \{c,f\}, \{d,e\}, \{g\}\}$, and $\bbQ_r$ is changed to $\{\{a\}$, $\{b,c,d,e,f,g\}\}$. 
		\item In the next three iterations of the main loop, $\bbP$ does not change any more, but $\bbQ_r$ is refined gradually until becoming $\bbP$. The algorithm terminates with the current $\bbP$ as the result.
		\myend
	\end{itemize}
\end{example}

The following theorem is a counterpart of Theorem~\ref{theorem: UDFND}. It can be proved analogously.

\begin{theorem}\label{theorem: UDFND 2}
\sCompCB is a correct algorithm for computing the partition corresponding to the largest s-bisimulation of a given finite fuzzy graph.
\end{theorem}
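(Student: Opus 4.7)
The plan is to mirror the proof of Theorem~\ref{theorem: UDFND} step by step, substituting s-stability for stability and $\bbP_1$, $\sSplit$ for $\bbP_0$, $\itSplit$. First I would dispose of the trivial case $\bbP_1 = \{V\}$, in which $\{V\}$ is already s-stable and, by Lemma~\ref{lemma: HGRLA2}, corresponds to the largest s-bisimulation of~$G$. So assume from now on that $\bbP_1 \neq \{V\}$.

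The next step is to establish the counterparts of Lemmas~\ref{lemma: JHKSS} and~\ref{lemma: DFJWA}. For the first, if $\mathcal{Y}$ is a family of pairwise disjoint non-empty subsets of $V$ (as blocks of a partition are) and each $X \in \bbP$ is s-stable with respect to every $\tuple{Y,r}$ with $Y \in \mathcal{Y}$, then for any $x,x' \in X$ and $d \in (0,1]$ one has
\[
|\xrpY{x}{r}{d}{\bigcup\mathcal{Y}}| = \sum_{Y \in \mathcal{Y}} |\xrdY{x}| = \sum_{Y \in \mathcal{Y}} |\xrdY{x'}| = |\xrpY{x'}{r}{d}{\bigcup\mathcal{Y}}|,
\]
so $\bbP$ is s-stable with respect to $\tuple{\bigcup\mathcal{Y},r}$. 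From this the analog of Lemma~\ref{lemma: DFJWA} follows verbatim: an s-stable $\bbS$ refining $\bbP$ which refines $\bbQ$ is, for any $Y' \in \bbP$ and $Y \in \bbQ$ with $Y' \subset Y$, also s-stable with respect to both $\tuple{Y',r}$ and $\tuple{Y\setminus Y',r}$ (since $Y'$ and $Y\setminus Y'$ are unions of blocks of $\bbS$), and hence refines $\sSplit(\bbP,\tuple{Y',Y,r})$.

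With these in hand, I would prove the loop invariants of \sCompCB (the analog of Lemma~\ref{lemma: invariants}): letting $\bbS$ be the coarsest s-stable refinement of~$\bbP_1$ (which exists by Corollary~\ref{cor: JHDKS2} and Lemma~\ref{lemma: HGRLA2}), (i)~$\bbP$ is a refinement of $\bbP_1$ and of $\bbQ_r$ for every $r$, (ii)~$\bbS$ refines $\bbP$, and (iii)~$\bbP$ is s-stable with respect to $\tuple{B,r}$ for every $r \in \SE$ and $B \in \bbQ_r$. Invariants (i) and (iii) are immediate from the initialization and from the definition of $\sSplit$; invariant~(ii) follows inductively from the s-version of Lemma~\ref{lemma: DFJWA}.

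Termination is then argued as in Theorem~\ref{theorem: UDFND}: each iteration strictly refines some $\bbQ_r$, and partitions of the finite set $V$ cannot be refined forever. Upon termination, $\bbQ_r = \bbP$ for every $r$, so invariant~(iii) says the final $\bbP$ is s-stable, and invariant~(i) says it refines~$\bbP_1$; hence $\bbP$ is an s-stable refinement of $\bbP_1$, so $\bbP$ refines $\bbS$. Combined with invariant~(ii) this gives $\bbP = \bbS$, and Lemma~\ref{lemma: HGRLA2} identifies $\bbS$ with the partition corresponding to the largest s-bisimulation of~$G$. The only place where the proof genuinely differs from the stability case is in the s-version of Lemma~\ref{lemma: JHKSS}, and the main (mild) obstacle is simply recognizing that additivity of cardinalities over disjoint unions plays exactly the role that monotonicity of suprema plays in the original argument; the rest is a transcription.
\myend
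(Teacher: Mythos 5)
Your proof is correct and is exactly the argument the paper intends: the paper establishes this theorem only by remarking that it ``can be proved analogously'' to Theorem~\ref{theorem: UDFND}, and your transcription supplies precisely those analogous details (trivial case, s-versions of Lemmas~\ref{lemma: JHKSS} and~\ref{lemma: DFJWA}, the three loop invariants, termination, and the squeeze $\bbP = \bbS$ via Lemma~\ref{lemma: HGRLA2}). Your one substantive addition --- that the s-version of Lemma~\ref{lemma: JHKSS} requires the family $\mathcal{Y}$ to be pairwise disjoint so that cardinalities add over the union, in place of the monotonicity of suprema used in the original --- is accurate and is correctly discharged, since in every application the relevant sets are unions of blocks of a partition.
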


Let's now consider the problem how to implement the algorithm \sCompCB to obtain an efficient algorithm by revising Algorithm~\ref{algCompCBt}. 

We define that two objects of the class $\BlockEdge$ are {\em equal} (as in the Java language) if the maps specified by them are equal. This definition is essential, for example, when block-edges are used as keys for (higher-order) maps.

Let $\sInitialize$ be the procedure obtained from $\Initialize$ by replacing the condition $\blockEdges[x,r].\maxKey() = \blockEdges[x',r].\maxKey()$ in the statement~\ref{step: HGJWA 17} with the condition that $\blockEdges[x,r]$ and $\blockEdges[x',r]$ are equal. The reason is that the algorithm \sCompCB uses $\bbP_1$ instead of $\bbP_0$.

Observe that, as an invariant of the loop of the algorithm \sCompCB, $\bbP$ is s-stable with respect to $\tuple{Y,r}$ for all $r \in \SE$ and $Y \in \bbQ_r$. 
Furthermore, if $X$ is s-stable with respect to $\tuple{Y,r}$ and $0 \subset Y' \subset Y$, then $X$ is s-stable with respect to $\tuple{Y',r}$ iff it is s-stable with respect to $\tuple{Y\setminus Y',r}$. 
Thus, the algorithm \sCompCB remains the same if we modify the definition of $\sSplit$ so that $\sSplit(X,\tuple{Y',Y,r})$ is the coarsest partition of $X$ such that each of its blocks is s-stable with respect to $\tuple{Y\setminus Y',r}$. 

\begin{figure*}
	\begin{procedure}[H]
		\caption{s-ComputeSubblocks($\red{\verticesYp},r$)\label{proc: sComputeSubblocks}}
		\ForEach{$y \in \red{\verticesYp}$ and $e \in y.\comingEdges\red{[r]}$}{
			$x := e.\edgeOrigin$, $bx := x.\block$\; 
			\If{not $x.\processed$}{
				$bE := e.\blockEdge$\;
				
				\If{$bE \notin bx.\departingBlocks.\keys()$}{
					$bx.\departingBlocks[bE] := \New \VertexList()$\;
				}
				
				\red{move $x$ from $bx.\vertices$ to $bx.\departingBlocks[bE]$\label{step: JHDJQ 6}\;}
				$x.\processed := \true$\;
			}
		}
	\end{procedure}
	
	\smallskip
	
	\begin{procedure}[H]
		\caption{s-DoSplitting($Y',\red{\verticesYp,} Y,r$)\label{proc: sDoSplitting}}
		$Y.\removeBlock(Y')$\;
		$\createSuperBlock(Y.\superPartition, Y', r)$\;
		
		\BlankLine
		
		\ForEach{$y \in \red{\verticesYp}$ and $e \in y.\comingEdges\red{[r]}$}{
			$e.\blockEdge := e.\blockEdge.\departingBlockEdge$\; 
			$x := e.\edgeOrigin$, $bx := x.\block$\; 
			
			\If{\markRed $\Not bx.\departingBlocks.\Empty()$}{
				{\markRed
				\If{$bx.\vertices.\Empty()$\label{step: HGPEB-2}}{
					let $bE$ be the first key of the map $bx.\departingBlocks$\;
					$bx.\vertices := bx.\departingBlocks[bE]$\;
					remove the key $bE$ from the map $bx.\departingBlocks$\;
				}
				}		
				
				\ForEach{$bE \in bx.\departingBlocks.\keys()$}{
					$\createBlock(bx.\departingBlocks[bE])$\;
				} 
				$bx.\departingBlocks.\clear()$\;
			}
		}
	\end{procedure}
\end{figure*}

To reflect the above observation, we assume the following modifications for the class $\Block$ specified in Section~\ref{sec: DS}: 
\begin{itemize}
	\item the attributes \mbox{$\departingBlocks_i: [0,1] \to \VertexList$} 
	(for $i \in \{1,2\}$) are replaced by the attribute 
	\mbox{$\departingBlocks: \BlockEdge \to \VertexList$}; 
	\item the constructor sets $\departingBlocks$ instead of $\departingBlocks_i$ ($i \in \{1,2\}$) to a newly created empty map of the corresponding type. 
\end{itemize}
Then, instead of the procedure $\ComputeSubblocks(\red{\verticesYp},r)$ we will use its revision $\sComputeSubblocks(\red{\verticesYp},r)$ defined on page~\pageref{proc: sComputeSubblocks}. 
To explain this new procedure, consider its call instead of $\ComputeSubblocks(\red{\verticesYp},r)$ in the procedure $\Split(Y',Y,r)$. Recall that the block-edge that connects a vertex $x$ to the super-block $Y$ via normal edges labeled by~$r$, which is $bE$ in the definition of $\sComputeSubblocks$, has been updated by the call $\ComputeBlockEdges(\red{\verticesYp},r)$ to become the one that plays the role of a block-edge connecting $x$ to the future super-block with contents $Y \setminus Y'$ of $\bbQ[r]$. The subblock containing $x$ that should result from splitting the block of $x$ by $\tuple{Y',Y,r}$ is identified by $bE$. Therefore, we put $x$ to that subblock by the statement~\ref{step: JHDJQ 6} in the procedure $\sComputeSubblocks$. 
With these changes, it is natural to revise the procedure $\DoSplitting$ to get the procedure $\sDoSplitting$ shown on page~\pageref{proc: sDoSplitting}. 

Let $\sSfSplit$ be the procedure obtained from the procedure $\Split$ by replacing $\ComputeSubblocks$ and $\DoSplitting$ with $\sComputeSubblocks$ and $\sDoSplitting$, respectively. 
Then, let \sCompCBt be the algorithm obtained from Algorithm~\ref{algCompCBt} by replacing $\Initialize$ and $\Split$ with $\sInitialize$ and $\sSfSplit$, respectively. 
It can be seen that this revised algorithm reflects the run of the algorithm \sCompCB. Hence, we arrive at the following result. 

\begin{theorem}
\sCompCBt is a correct algorithm for computing the partition corresponding to the largest s-bisimulation of a given finite fuzzy graph.
\end{theorem}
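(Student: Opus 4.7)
The plan is to reduce the claim to Theorem~\ref{theorem: UDFND 2} by establishing a simulation: I will argue that every iteration of the main loop of \sCompCBt faithfully realises one iteration of \sCompCB. Since Theorem~\ref{theorem: UDFND 2} already gives correctness of the abstract algorithm \sCompCB, it suffices to prove two simulation claims: (i) after $\sInitialize(G)$ the value of $\bbP$ represents $\bbP_1$ and each $\bbQ[r]$ represents $\{V\}$; (ii) each call $\sSfSplit(Y',Y,r)$ transforms the current representation of $\bbP$ into a representation of $\sSplit(\bbP,\tuple{Y',Y,r})$, while updating $\bbQ[r]$ by replacing $Y$ with $Y\setminus Y'$ and $Y'$. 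The loop structure itself is identical to that of Algorithm~\ref{algCompCBt}, which was already shown to mirror Algorithm~\ref{algCompCB} at the control-flow level, so nothing new is needed there.

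For (i), I would verify that after the modified initialisation the block-edge $\blockEdges[x,r]$ correctly encodes, for every $x \in V$ and $r \in \SE$, the multiset $\{E(x,r,y) \mid y \in V,\ E(x,r,y) > 0\}$, i.e.\ the map $d \mapsto |\xrpY{x}{r}{d}{V}|$ restricted to positive keys. Given this, the equality test on block-edges used in the revised step~\ref{step: HGJWA 17} partitions $V$ exactly by the equivalence underlying $\bbP_1$, because the condition ``$\blockEdges[x,r]$ equals $\blockEdges[x',r]$ for all $r \in \SE$'' unfolds to ``$|\xrpY{x}{r}{d}{V}| = |\xrpY{x'}{r}{d}{V}|$ for all $r \in \SE$ and $d \in (0,1]$'', which together with $L(x) = L(x')$ is the definition of $\bbP_1$. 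The creation of $\{V\}$ as the initial value of each super-partition is unchanged from Algorithm~\ref{algCompCBt}.

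For (ii), I would analyse $\sSfSplit(Y',Y,r)$ stage by stage. The call $\ComputeBlockEdges(\verticesYp,r)$ is untouched and, as in the proof for Algorithm~\ref{algCompCBt}, transforms each relevant block-edge $bE$ from one that describes the edges from its origin $x$ to $Y$ into one that describes the edges from $x$ to $Y\setminus Y'$, while the newly allocated $bE.\departingBlockEdge$ describes the edges from $x$ to $Y'$. Using the invariant that $\bbP$ is s-stable with respect to $\tuple{Y,r}$, it follows (exactly as noted in the excerpt before the introduction of $\sSplit$) that s-stability of a subblock with respect to $\tuple{Y\setminus Y',r}$ is equivalent to s-stability with respect to $\tuple{Y',r}$, and each such subblock is uniquely determined by the block-edge $bE$ now encoding the multiset for $Y\setminus Y'$. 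Hence \sComputeSubblocks correctly groups the vertices $x \in X$ of each touched block $bx$ into classes indexed by equal block-edges, and $\sDoSplitting$ then materialises exactly the partition $\sSplit(X,\tuple{Y',Y,r})$: the step that reuses $bx.\vertices$ for one class avoids needless block deletion, while the $\createBlock$ calls instantiate the remaining classes. The super-partition update ($Y.\removeBlock(Y')$ followed by $\createSuperBlock$) is inherited unchanged and still realises the replacement of $Y$ by $Y\setminus Y'$ and $Y'$ in $\bbQ[r]$. Finally, $\ClearAuxiliaryInfo$ resets the per-iteration scratch fields, so the state is well-prepared for the next iteration.

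The main obstacle, and the step deserving the most careful argument, is the equivalence ``two vertices $x,x' \in X$ land in the same class of $\sSplit(X,\tuple{Y',Y,r})$ iff their block-edges (after $\ComputeBlockEdges$) are equal as objects of $\BlockEdge$''. The forward direction needs the invariant s-stability of $X$ with respect to $\tuple{Y,r}$ to transfer to s-stability with respect to $\tuple{Y\setminus Y',r}$; the backward direction needs the converse implication, namely that equality of the full block-edge maps implies equality of $|\xrdY[Y\setminus Y']{x}|$ for every $d \in (0,1]$, which in turn, combined with s-stability with respect to $\tuple{Y,r}$, yields equality for $\xrdY{x}$ as well. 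Once this combinatorial equivalence is in place, the simulation of \sCompCB by \sCompCBt is complete and Theorem~\ref{theorem: UDFND 2} gives the conclusion.
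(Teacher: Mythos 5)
Your proposal is correct and follows essentially the same route as the paper, which justifies the theorem with the single remark that \sCompCBt ``reflects the run of'' \sCompCB and then invokes Theorem~\ref{theorem: UDFND 2}; your two simulation claims (initialisation yields $\bbP_1$, and $\sSfSplit$ realises $\sSplit$ together with the $\bbQ[r]$ update) are precisely the details the paper leaves implicit, including the key observation that under s-stability with respect to $\tuple{Y,r}$ the subblocks are identified by the block-edge encoding the multiset of degrees into $Y\setminus Y'$.
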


Observe that the number of possible keys of the attribute $\departingBlocks$ of objects of the revised class $\Block$ is bounded by $m$. Consequently, the complexity of the procedure $\sSfSplit(Y',Y,r)$ is of order $O(|Y'| + |\!\uparrow_r\!Y'|\cdot \log{m})$. Using the technique applied in Section~\ref{sec: comp-anal}, we arrive at the following theorem. 

\begin{theorem}
The time complexity of the algorithm \sCompCBt is of order $O((m\log{m}$ + $n)\log{n})$.
\end{theorem}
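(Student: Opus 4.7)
The plan is to mirror, essentially line-by-line, the complexity analysis carried out in Section~\ref{sec: comp-anal} for Algorithm~\ref{algCompCBt}, with the $\log l$ factor replaced by $\log m$ wherever it arises from map operations involving the newly defined $\departingBlocks$ attribute of the class $\Block$. The starting observation, already noted immediately before the statement, is that the keys of $\departingBlocks$ are now objects of type $\BlockEdge$ and the number of distinct block-edges existing at any moment is bounded by $m$ (each is associated to some pair $\tuple{x,r}$ with a nonzero outgoing edge). Hence, any insertion, lookup or removal on $\departingBlocks$ costs $O(\log m)$, in contrast to the $O(\log l)$ cost that drove the previous analysis.

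First I would revisit the per-call cost of the subroutines invoked by $\sSfSplit(Y',Y,r)$. The procedure $\ComputeBlockEdges$ is unchanged and still costs $O(|Y'| + |\!\uparrow_r\!Y'|\cdot \log l)$, which is absorbed by $O(|Y'| + |\!\uparrow_r\!Y'|\cdot \log m)$ since $l \leq m+1$. The procedure $\sComputeSubblocks$ performs, for each relevant coming edge, a constant number of operations on the map $\departingBlocks$ (lookup, possible insertion, list moving), each costing $O(\log m)$; its total cost is therefore $O(|Y'| + |\!\uparrow_r\!Y'|\cdot \log m)$. The same bound is obtained for $\sDoSplitting$, since the total size of the newly created blocks (processed in the loops over $\departingBlocks.\keys()$) is bounded by $|\!\uparrow_r\!Y'|$, each key access costing $O(\log m)$, and the auxiliary ``non-empty $\vertices$'' adjustment analogous to the one in Section~\ref{sec: comp-anal} contributes the same order. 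Finally, $\ClearAuxiliaryInfo$ remains $O(|Y'| + |\!\uparrow_r\!Y'|)$. Summing these, $\sSfSplit(Y',Y,r)$ has complexity $O(|Y'| + |\!\uparrow_r\!Y'|\cdot \log m)$.

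Next, I would distribute this cost to individual vertices $y \in Y'$, charging $O(1 + |r^{-1}(y)|\cdot \log m)$ to each. The ``process the smaller half'' argument of Section~\ref{sec: comp-anal} carries over verbatim: because of the condition $|Y'| \leq |Y|/2$, any fixed pair $\tuple{y,r}$ can belong to the first argument of a call of $\sSfSplit(\cdot,\cdot,r)$ at most $\log n$ times during the whole execution. Hence the total cost of all calls of $\sSfSplit$ is of order
\[
O\!\left(\sum_{y \in V} \sum_{r \in \SE} \log n \cdot \bigl(1 + |r^{-1}(y)| \cdot \log m\bigr)\right)
= O((m \log m + n)\log n).
\]
Adding the $O((m+n)\log n)$ cost of $\sInitialize(G)$, which is the same as that of $\Initialize(G)$ up to constant factors (the only change is testing equality of two block-edges rather than comparing their maximum keys, which is still $O(\log m \cdot \log l) = O(\log^2 m)$ per pair and is absorbed), the grand total is $O((m\log m + n)\log n)$, as claimed.

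The only nontrivial point to watch is the bound on the number of distinct keys of $\departingBlocks$ used as the base of the logarithm in the map operations; this is where $\log l$ is replaced by $\log m$. Once that observation is established, every other step is a direct transcription of the argument already given for Algorithm~\ref{algCompCBt}.
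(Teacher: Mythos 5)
Your proposal is correct and follows essentially the same route as the paper, which likewise reduces the theorem to the observation that the keys of $\departingBlocks$ are block-edges bounded in number by $m$ (so each map operation costs $O(\log m)$ instead of $O(\log l)$) and then reuses the charging argument of Section~\ref{sec: comp-anal} verbatim. The only cosmetic remark is that the bound of $m$ on the number of block-edges is best justified by noting that the block-edges are non-empty and partition the set of nonzero edges, rather than by associating each to a pair $\tuple{x,r}$; this does not affect the argument.
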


Note that, if $m \geq n$, then the complexity order in the above theorem can be simplified to $O(m\log^2(n))$. 

{\markRed
\section{Experiments}
\label{sec: experiments}

\newcommand{\crispbis}{\mbox{\textit{crispbis}}\xspace}
\newcommand{\crispbisCS}{\mbox{\textit{crispbis\_cs}}\xspace}
\newcommand{\crispbisPY}{\mbox{\textit{CompCB.py}}\xspace}
\newcommand{\genTest}{\mbox{\textit{genTest}}\xspace}
\newcommand{\genTestA}{\mbox{\textit{genTest1}}\xspace}
\newcommand{\genTestB}{\mbox{\textit{genTest2}}\xspace}
\newcommand{\genTestC}{\mbox{\textit{genTest3}}\xspace}
\newcommand{\withCountingSuccessors}{\mbox{\textit{-$\,$-withCountingSuccessors}}\xspace}
\newcommand{\naive}{\mbox{\textit{-$\,$-naive}}\xspace}

We have implemented both the algorithms \CompCBt and \sCompCBt in Python and C++~\cite{compCBfFS-code}, resulting in the programs \crispbis and \crispbisCS, compiled from C++, and the program \crispbisPY in Python, where \crispbis and \crispbisPY implement the algorithm \CompCBt, whereas \crispbisCS and ``\crispbisPY \withCountingSuccessors'' (i.e., \crispbisPY with the given option) implement the algorithm \sCompCBt. For testing the correctness of the two mentioned algorithms, we have also implemented in Python naive algorithms for the considered problems, which can be run by using the same program \crispbisPY with the (additional) option $\naive$. All of the mentioned programs read the input fuzzy graph from the standard input stream and write the resulting partition to the standard output stream. For the format of inputs and outputs, we refer the reader to~\cite{compCBfFS-code}.  

The above mentioned naive counterpart of \CompCBt (respectively, \sCompCBt) is as follows. It keeps a doubly linked list of blocks of the current partition $\bbP$ (of $V$), which is initialized by using $\bbP_0$ (respectively, $\bbP_1$). While the flag $\changed$ is set on (as initially), if there exist blocks $X$ and $Y$ of $\bbP$ and an edge label $r \in \SE$ such that $X$ is not stable (respectively, s-stable) with respect to $\tuple{Y,r}$, then the algorithm replaces $X$ in $\bbP$ with the blocks obtained from splitting $X$ using $\tuple{Y,s}$, for all $s \in \SE$, so that all of them are stable (respectively, s-stable) with respect to $\tuple{Y,s}$, else it sets off the flag $\changed$ (to break the main loop). The implementation has been optimized by the use of the doubly linked list of blocks of the current partition and that the blocks obtained from splitting $X$ are added to the beginning of the list to better propagate the changes. Analyzing the complexity of this naive algorithm is not an easy task. In the worst case, the complexity is at least $\Omega((m+n)n)$, where $m = |E|$ and $n = |V|$.

\subsection{Generating Test Data}

We have implemented (in C++) a program \genTest for generating random tests for the above mentioned programs (\crispbis,  \crispbisCS and \crispbisPY). This program takes a number of parameters given as natural numbers, where the first one is 1, 2 or 3 and specifies which function, \genTestA, \genTestB or \genTestC, is called with the remaining parameters. These functions are specified below. All of them write (the information of) a randomly generated fuzzy graph to the standard output stream. 

\begin{table}
\markRed
\footnotesize
\begin{center}
\begin{tabular}{|c|l|r|r|r|r|r|r|}
\hline
Test & \multicolumn{1}{|c|}{Parameters for} & \multicolumn{1}{|c|}{$|V|$} & \multicolumn{1}{|c|}{$|E|$} & \multicolumn{1}{|c|}{Blocks} & \multicolumn{3}{|c|}{Time (ms)} \\
\cline{6-8} 
& \multicolumn{1}{|c|}{\genTest} & & & & naive Python & Python & C++ \\
\hline
1 & 1 100 0 0 & 20000 & 3960000 & 200 & 215730 & 97402 & 25689 \\
2 & 1 100 1 0 & 20000 & 4000000 & 200 & 214811 & 96218 & 25720 \\
3 & 2 100 0	  & 20000 & 2000000 & 100	& 59991 & 44947 & 13645 \\
\hline
4 & 3 100 10 20 0 3 0 2	& 1000 & 2000 & 852 & 64171 & 140 & 19 \\
5 & 3 100 10 20 10 10 1 2 & 1000 & 2000 & 890 & 33425	& 142 & 21 \\
6 & 3 100 10 50 0 3 0 1	& 1000 & 5000 & 804 & 53887 & 174 & 30 \\
7 & 3 100 10 50 5 5 1 1	& 1000 & 5000 & 1000 & 41159 & 193 & 26 \\
8 & 3 100 10 60 0 3 0 2	& 1000 & 6000 & 1000 & 57283 & 239 & 34 \\
9 & 3 100 10 60 10 10 1 2 & 1000 & 6000 & 1000 & 41917 & 228 & 35 \\
\hline
10 & 3 $10^5$ 10 20 0 3 0 2 & $10^6$ & $2*10^6$ & 791236 & & & 21783 \\
11 & 3 $10^5$ 10 20 10 10 1 2 & $10^6$ & $2*10^6$ & 841240 & & & 23651 \\
12 & 3 $10^5$ 10 50 0 3 0 1 & $10^6$ & $5*10^6$ & 833080 & & & 28274 \\
13 & 3 $10^5$ 10 50 5 5 1 1 & $10^6$ & $5*10^6$ & 998672 & & & 28868 \\
14 & 3 $10^5$ 10 60 0 3 0 2 & $10^6$ & $6*10^6$ & 998816 & & & 36730 \\
15 & 3 $10^5$ 10 60 10 10 1 2 & $10^6$ & $6*10^6$ & 999191 & & & 37821 \\
\hline
16 & 3 $10^5$ 10 60 10 $10^6$ 1 2 & $10^6$ & $6*10^6$ & $10^6$ & & & 43868 \\
17 & 3 1 $10^6$ ($6*10^6$) ($5*10^5$) 3 1 2 & $10^6$ & $6*10^6$ & 997487 & & & 61564 \\
\hline
\end{tabular}

\medskip

\begin{tabular}{|c|l|r|r|r|r|r|r|}
\hline
Test & \multicolumn{1}{|c|}{Parameters for} & \multicolumn{1}{|c|}{$|V|$} & \multicolumn{1}{|c|}{$|E|$} & \multicolumn{1}{|c|}{Blocks} & \multicolumn{3}{|c|}{Time (ms) for the setting} \\
& \multicolumn{1}{|c|}{\genTest} & & & & \multicolumn{3}{|c|}{with {\em counting successors}} \\
\cline{6-8} 
& & & & & naive Python & Python & C++ \\
\hline
18 & 1 100 0 1 & 20000 & 3960000 & 200 & 232607 & 91168 & 22712 \\
19 & 1 100 1 1 & 20000 & 4000000 & 200 & 236204 & 91817 & 22925 \\
20 & 2 100 1	& 20000 & 2000000 & 100 & 67237 & 43449 & 11198 \\
\hline
21 & 3 100 10 20 0 3 0 2 & 1000 & 2000 & 835 & 103839 & 139 & 20 \\
22 & 3 100 10 20 10 3 1 2 & 1000 & 2000 & 868 & 97271 & 137 & 20 \\
23 & 3 100 10 50 0 3 0 1 & 1000 & 5000 & 1000 & 152001 & 190 & 31 \\
24 & 3 100 10 50 5 3 1 1 & 1000 & 5000 & 1000 & 50382 & 189 & 27 \\
25 & 3 100 10 60 0 3 0 2 & 1000 & 6000 & 999 & 698646 & 225 & 41 \\
26 & 3 100 10 60 10 3 1 2 & 1000 & 6000 & 1000	& 104259 & 221 & 43 \\
\hline
27 & 3 $10^5$ 10 20 0 3 0 2 & $10^6$ & $2*10^6$ & 792680 & & & 21845 \\
28 & 3 $10^5$ 10 20 10 3 1 2 & $10^6$ & $2*10^6$ & 812089 & & & 24752 \\
29 & 3 $10^5$ 10 50 0 3 0 1 & $10^6$ & $5*10^6$ & 998609 & & & 26938 \\
30 & 3 $10^5$ 10 50 5 3 1 1 & $10^6$ & $5*10^6$ & 998656 & & & 28849 \\
31 & 3 $10^5$ 10 60 0 3 0 2 & $10^6$ & $6*10^6$ & 998896 & & & 35952 \\
32 & 3 $10^5$ 10 60 10 3 1 2 & $10^6$ & $6*10^6$ & 998907 & & & 39690 \\
\hline
33 & 3 $10^5$ 10 60 10 $10^6$ 1 2 & $10^6$ & $6*10^6$ & $10^6$ & & & 51009 \\
34 & 3 1 $10^6$ ($6*10^6$) ($5*10^5$) 3 1 2 & $10^6$ & $6*10^6$ & 997574 & & & 61862 \\
\hline
\end{tabular}
\end{center}
\caption{Execution times of the programs \crispbis, \crispbisCS and \crispbisPY for the input data generated by the program \genTest using the parameters given in the second column. The upper part contains the results of \crispbisPY without the option \withCountingSuccessors and \crispbis, whereas the lower part contains the results of \crispbisPY with the option \withCountingSuccessors and \crispbisCS. The execution times of \crispbisPY without (respectively, with) the $\naive$ option are given in the sixth (respectively, seventh) column. The execution times of \crispbis and \crispbisCS are given in the eighth column. A value in the third (respectively, fourth) column is the number of vertices (respectively, non-zero edges) of the fuzzy graph generated by \genTest. A value in the fifth column is the (average) number of blocks of the resulting partition. The tests have been done using a laptop with Intel$^{\textrm{\textregistered}}$ Core$^\mathrm{TM}$ i3-3120M CPU~@2.50GHz$\,\times\,$4 and 4~GB RAM. Each of them has been repeated three times. The results in the columns 5--8 are the averages from the three repeats. A blank field in the table mean we did not perform the corresponding test for the corresponding program.\label{table: JHDHG}}
\end{table}

\newcommand{\nxti}{i_{\oplus 1}}
\newcommand{\nxtj}{j_{\oplus 1}}

\medskip

\noindent {\bf Function \genTestA:} This function has parameters $k$, $cyclic$ and $countingSuccessors$, where the first one is a positive integer and the remaining ones are boolean values (0 or 1). It constructs a fuzzy graph $G = \tuple{V, E, L, \SV, \SE}$ as follows.
\begin{itemize}
\item $V := \{a_{i,j}, b_{i,j} \mid i,j \in \{0,\ldots,k-1\}\}$, $\SV := \{p,q\}$ and $\SE := \{r,s\}$;
\item let $d_1$, $d_2$, $d_{r,a}$, $d_{s,a}$, $d_{r,b}$, $d_{s,b}$ be random constants from $(0,1]$, where $d_1 \neq d_2$;
\item initialize $E$ and $L$ by setting them to the appropriate empty data structures;
\item for $i,j \in \{0,\ldots,k-1\}$:
	\begin{itemize}
	\item $L(a_{i,j})(p) := d_1$ and $L(b_{i,j})(p) := d_2$;
	\item if $cyclic$: $L(a_{0,j})(q) := 1$ and $L(b_{0,j})(q) := 1$;
	\item if $\lnot cyclic$ and $i = k-1$: break;
	\item let $\nxti$ denote $i+1$ if $i < k-1$ and $0$ otherwise;
	\item let $t_{aa}$, $t_{ab}$, $t_{ba}$ and $t_{bb}$ be random indices from $\{0,\ldots,k-1\}$;
	\item $E(a_{i,j}, r, a_{\nxti,t_{aa}}) := d_{r,a}$ and $E(a_{i,j}, s, a_{\nxti,t_{aa}}) := d_{s,a}$;
	\item $E(a_{i,j}, r, b_{\nxti,t_{ab}}) := d_{r,b}$ and $E(a_{i,j}, s, b_{\nxti,t_{ab}}) := d_{s,b}$;
	\item $E(b_{i,j}, r, a_{\nxti,t_{ba}}) := d_{r,a}$ and $E(b_{i,j}, s, a_{\nxti,t_{ba}}) := d_{s,a}$;
	\item $E(b_{i,j}, r, b_{\nxti,t_{bb}}) := d_{r,b}$ and $E(b_{i,j}, s, b_{\nxti,t_{bb}}) := d_{s,b}$;
	\item let $d'_{r,a}$, $d'_{s,a}$, $d'_{r,b}$ and $d'_{s,b}$ be random values from $(0,1]$ that are less than $d_{r,a}$, $d_{s,a}$, $d_{r,b}$ and $d_{s,b}$, respectively;
	\item for each $0 \leq j' < k$:
		\begin{itemize}
		\item if $\lnot countingSuccessors$: let $d'_{r,a}$, $d'_{s,a}$, $d'_{r,b}$ and $d'_{s,b}$ be random values from $(0,1]$ that are less than or equal to $d_{r,a}$, $d_{s,a}$, $d_{r,b}$ and $d_{s,b}$, respectively; 
		\item if $j' \neq t_{aa}$: $E(a_{i,j}, r, a_{\nxti,j'}) := d'_{r,a}$ and $E(a_{i,j}, s, a_{\nxti,j'}) := d'_{s,a}$; 
		\item if $j' \neq t_{ab}$: $E(a_{i,j}, r, b_{\nxti,j'}) := d'_{r,b}$ and $E(a_{i,j}, s, b_{\nxti,j'}) := d'_{s,b}$;
		\item if $j' \neq t_{ba}$: $E(b_{i,j}, r, a_{\nxti,j'}) := d'_{r,a}$ and $E(b_{i,j}, s, a_{\nxti,j'}) := d'_{s,a}$;
		\item if $j' \neq t_{bb}$: $E(b_{i,j}, r, b_{\nxti,j'}) := d'_{r,b}$ and $E(b_{i,j}, s, b_{\nxti,j'}) := d'_{s,b}$. 
		\end{itemize}
	\end{itemize}
\end{itemize}
We have $|V| = 2 k^2$ and $|E| \in \{4(k-1)k^2, 4k^3\}$ (depending on the boolean option $cyclic$).  
Roughly speaking, the generated fuzzy graph $G$ contains $k$ layers of vertices, where the \mbox{$i$-th} layer consists of $a_{i,j}$ and $b_{i,j}$ for all $0 \leq j < k$, and each layer is ``connected'' only to the next one. If the parameter $cyclic$ is $\true$, then the last layer is ``connected'' to the first one. The construction of $G$ is designed so that, if the parameter $countingSuccessors$ is $\false$ (respectively, $\true$), then running the algorithm \CompCBt (respectively, \sCompCBt) for $G$ should result in the partition consisting of the blocks $a_i = \{a_{i,j} \mid 0 \leq j < k\}$ and $b_i = \{b_{i,j} \mid 0 \leq j < k\}$, for $0 \leq i < k$. 

\medskip

\noindent \mbox{\bf Function \genTestB:} This function has parameters $k$ (a positive integer) and $countingSuccessors$ (a boolean value, 0 or 1). It constructs a fuzzy graph $G = \tuple{V, E, L, \SV, \SE}$ as follows. 
\begin{itemize}
\item $V := \{a_{i,j}, b_{i,j} \mid i,j \in \{0,\ldots,k-1\}\}$, $\SV := \{p\}$ and $\SE := \{r,s\}$;
\item initialize $E$ and $L$ by setting them to the appropriate empty data structures;
\item for $0 \leq i < k$: $L(a_{i,0})(p) := 1$ and $L(b_{i,0})(p) := 1$;
\item for $i,j \in \{0,\ldots,k-1\}$: 
	\begin{itemize}
	\item let $\nxti$ denote $i+1$ if $i < k-1$ and $0$ otherwise;
	\item let $\nxtj$ denote $j+1$ if $j < k-1$ and $0$ otherwise;
	\item for $0 \leq i' < k$: 
		\begin{itemize}
		\item if ($i' = i$ and $j < k - 1$) or ($i' = \nxti$ and $j = k - 1$):\\ set $E(a_{i,j}, r, a_{i',\nxtj})$, $E(a_{i,j}, s, a_{i',\nxtj})$, $E(b_{i,j}, r, b_{i',\nxtj})$ and $E(b_{i,j}, s, b_{i',\nxtj})$ to $1$;
		\item else if $\lnot countingSuccessors$: set each among $E(a_{i,j}, r, a_{i',\nxtj})$, $E(a_{i,j}, s, a_{i',\nxtj})$, $E(b_{i,j}, r, b_{i',\nxtj})$ and $E(b_{i,j}, s, b_{i',\nxtj})$ to a random value from $(0,1]$;
		\item else set $E(a_{i,j}, r, a_{i',\nxtj})$, $E(a_{i,j}, s, a_{i',\nxtj})$, $E(b_{i,j}, r, b_{i',\nxtj})$ and $E(b_{i,j}, s, b_{i',\nxtj})$ to~$1/k$.
		\end{itemize}
	\end{itemize}
\end{itemize}
We have $|V| = 2 k^2$ and $|E| = 2k^3$.  
Roughly speaking, the generated fuzzy graph $G$ contains two rings. The first ring consists of all $a_{i,j}$, for $i,j \in \{0,\ldots,k-1\}$, in which $a_{i,j}$ follows $a_{i,j-1}$ if $j > 0$ and follows $a_{i-1,k-1}$ if $j = 0$ and $i > 0$, whereas $a_{0,0}$ follows $a_{k-1,k-1}$. Each node of the ring is connected to the next one and $k-1$ other nodes of the ring. The second ring is similar and consists of all $b_{i,j}$, for $i,j \in \{0$, \ldots, $k-1\}$. The construction of $G$ is designed so that, if the parameter $countingSuccessors$ is $\false$ (respectively, $\true$), then running the algorithm \CompCBt (respectively, \sCompCBt) for $G$ should result in the partition consisting of the blocks $B_j = \{a_{i,j}, b_{i,j} \mid 0 \leq i < k\}$, for $0 \leq j < k$. 

\medskip

\noindent \mbox{\bf Function \genTestC:} This function constructs a random fuzzy graph $G = \tuple{V, E, L, \SV, \SE}$ using parameters $k$, $n'$, $m'$, $p$, $l$, $|\SV|$ and $|\SE|$, which are positive integers with the following meanings:
\begin{itemize}
\item $|\SV|$ and $|\SE|$ are the sizes of $\SV$ and $\SE$, respectively,
\item $l$ is the number of different non-zero fuzzy values used for $G$,
\item $k$ is the number of pairwise disjoint fuzzy subgraphs of $G$, each of which consists of $n'$ vertices, $m'$ non-zero edges and $p$ non-zero vertex labels (i.e., $p$ is the number of pairs $(v,q)$ such that $v$ is a vertex of the fuzzy subgraph, $q \in \SV$ and $L(v)(q) > 0$),
\end{itemize}
where random generations are done using uniform distributions. We have $|V| = k \times n'$ and $|E| = k \times m'$.

\subsection{Test Results}

We have performed many tests for the implemented programs. Their descriptions and results are reported in Table~\ref{table: JHDHG}. The aim of Tests 1--9 and 18--26 is to check the correctness of the algorithms and their implementations. All of them passed positively. Some notable observations are as follows:
\begin{itemize}
\item For Test~25, with $|V| = 1000$ and $|E| = 6000$, the naive version written in Python for the setting with counting successors runs about 3100 (respectively, 17000) times slower than the corresponding sophisticated version (i.e., \sCompCBt) written in Python (respectively, C++).
\item For Tests 4--9 and 21--26, the C++ version runs averagely 6.5 times faster than the (sophisticated) Python version. When we changed the second parameter of the tests from 100 to 10000 to generate 100 times bigger fuzzy graphs, the test results (not given in Table~\ref{table: JHDHG}) showed that the C++ version runs for them averagely 8.9 times faster than the (sophisticated) Python version. This is partially affected by the time needed for reading the inputs from the hard disk. 
\end{itemize}  

The aim of Tests 10--17 and 27--34 is to check the performance of the algorithms and their implementations in C++. Some notable observations are as follows:
\begin{itemize}
\item Tests 10--15 and 27--32 differ from Tests 4--9 and 21-26, respectively, only in that the second parameter for \genTest is 1000 times bigger (in order to generate 1000 times bigger fuzzy graphs). The time consumed for a test among the former is averagely 1041 times bigger than the time consumed for the corresponding one among the latter. This shows that the program \crispbis (respectively, \crispbisCS) is scalable and its performance is consistent with the complexity $O((m\log{l} + n)\log{n})$ (respectively, $O((m\log{m}$ + $n)\log{n})$) of the algorithm \CompCBt (respectively, \sCompCBt).

\item Test~16 (respectively, 33) differs from Test~15 (respectively,~32) only in that the sixth parameter for \genTest (i.e., the parameter~$l$ for \genTestC) is $10^6$ instead of 10 (respectively,~3). The time consumed for the former is about 1.16 (respectively, 1.29) times bigger than the time consumed for the latter. This shows that the programs \crispbis and \crispbisCS are scalable with respect to that parameter.

\item Comparing the test results for \crispbis and \crispbisCS (i.e., the upper part and the lower part of the last column of Table~\ref{table: JHDHG}), it can be seen that their amounts of time consumed for the tests are comparable to each other. This means that computing the largest crisp bisimulation of a finite fuzzy graph in the setting with counting successors does not require considerably more time than in the setting without counting successors. 

\item Both the programs \crispbis and \crispbisCS efficiently solved many random tests with $n^2 > 10^{12}$ and $m*n > 10^{12}$ (using the mentioned laptop). This implies that their complexity is much smaller than $(m+n)n$. This is consistent with the established complexity orders $O((m\log{l} + n)\log{n})$ and $O((m\log{m}$ + $n)\log{n})$ of the algorithms \CompCBt and \sCompCBt, respectively. 
\end{itemize}  
}

\section{Conclusions}
\label{sec: conc}

Surprisingly, as far as we know, before the current work there were no algorithms directly formulated for computing crisp bisimulations for fuzzy structures like FTSs or fuzzy interpretations in DLs. The algorithm given by Wu {\em et al.}~\cite{DBLP:journals/fss/WuCBD18} was designed for testing crisp bisimulation for NFTSs and can be applied to testing crisp bisimulation for FTSs, but its complexity $O(m^2n^4)$ is too high (and computing the largest bisimulation is more costly). One can try to adapt the algorithms with the complexity $O(n^3)$ given by Stanimirovi{\'c} {\em et at.} in~\cite{StanimirovicSC2019} to compute the largest crisp bisimulation of a given finite fuzzy automaton. 

The objective of this article was to develop efficient algorithms for computing bisimulations for fuzzy structures, for example, by applying the ideas of the Hopcroft algorithm~\cite{Hopcroft71} and the Paige and Tarjan algorithm~\cite{PaigeT87}. The task is not trivial, for example, the ideas were already exploited for the algorithms with the complexity $O(n^3)$ given by Stanimirovi{\'c} {\em et at.}~\cite{StanimirovicSC2019}, and one may wonder whether the complexity order \modified{$O((m+n)\log{n})$} like the one of the Paige and Tarjan algorithm~\cite{PaigeT87} can be obtained for the considered problem. 

We have managed to develop efficient algorithms with the complexity $O((m\log{l} + n)\log{n})$ (respectively, $O((m\log{m} + n)\log{n})$) for computing the partition corresponding to the largest crisp bisimulation of a given finite fuzzy labeled graph, for the setting without (respectively, with) counting successors. We chose fuzzy labeled graphs as they can represent fuzzy automata, FTSs and fuzzy interpretations in DLs. Taking $l = n^2$ for the \modified{worst} case and assuming that $m \geq n$, those complexity orders can be simplified to $O(m\log^2(n))$. 

\red{We have implemented both the algorithms \CompCBt and \sCompCBt in Python and C++, where the code in Python is very similar to the pseudocode given in this paper. To check the correctness of the mentioned algorithms and implemented programs, we have also implemented in Python naive algorithms for the considered problems. Many tested have been performed and all of them passed positively. The tests also show that the algorithms \CompCBt and \sCompCBt efficiently solved many randomly generated instances of the considered problem with $n^2 > 10^{12}$. This is consistent with the established complexity orders of the algorithms. Let us emphasize that all the source codes are available online and the reported tests can easily be repeated on a Linux system just by running an available Bash script~\cite{compCBfFS-code}.}


\bibliography{BSfDL}
\bibliographystyle{plain}


\end{document}